\newcommand{\zone}{\{0,1\}}
\newcommand{\cbra}[1]{\left\{#1\right\}}
\newcommand{\rbra}[1]{\left(#1\right)}
\newcommand{\eps}{\varepsilon}
\newtheorem{theorem}{Theorem}[section]
\newtheorem{corollary}[theorem]{Corollary}
\newtheorem{lemma}[theorem]{Lemma}
\newtheorem{construction}[theorem]{Construction}
\theoremstyle{definition}
\newtheorem{defi}[theorem]{Definition}
\newcommand{\cA}{\mathcal{A}}
\newcommand{\cR}{\mathcal{R}}
\newcommand{\sD}{\mathsf{D}}
\newcommand{\sR}{\mathsf{R}}
\newcommand{\cT}{\mathcal{T}}
\newcommand{\king}{\textsf{KING}}
\newcommand{\sking}{\textsf{STRONG-KING}}
\newcommand{\eking}{\textsf{EXIST-KING}}
\title{Hardness of Finding Kings and Strong Kings}
\author{Ziad Ismaili Alaoui \thanks{University of Liverpool, United Kingdom. \texttt{\{ziad.ismaili-alaoui, nikhil.mande\}@liverpool.ac.uk}}  \and Nikhil S. Mande \footnotemark[1]}
\date{}
\begin{document}

\maketitle

\begin{abstract}
    A king in a directed graph is a vertex $v$ such that every other vertex is reachable from $v$ via a path of length at most $2$. It is well known that every tournament (a complete graph where each edge has a direction) has at least one king. Our contributions in this work are:
    \begin{itemize}
        \item We show that the query complexity of determining existence of a king in arbitrary $n$-vertex digraphs is $\Theta(n^2)$. This is in stark contrast to the case where the input is a tournament, where Shen, Sheng, and Wu~[SICOMP'03] showed that a king can be found in $O(n^{3/2})$ queries.
        \item In an attempt to increase the ``fairness'' in the definition of tournament winners, Ho and Chang [IPL'03] defined a \emph{strong king} to be a king $k$ such that, for every $v$ that dominates $k$, the number of length-$2$ paths from $k$ to $v$ is strictly larger than the number of length-$2$ paths from $v$ to $k$. We show that the query complexity of finding a strong king in a tournament is $\Theta(n^2)$.
        This answers a question of Biswas, Jayapaul, Raman, and Satti~[DAM'22] in the negative.
    \end{itemize}
    A key component in our proofs is the design of specific tournaments where every vertex is a king, and analyzing certain properties of these tournaments. We feel these constructions and properties are independently interesting and may lead to more interesting results about tournament solutions.
\end{abstract}

\tableofcontents

\section{Introduction}
\label{s:introduction}

A tournament is a complete directed graph. Motivated to find a reasonable notion of a ``most-dominant chicken'' in a flock of chickens (where the ``pecking order'' is represented as a tournament), Landau~\cite{Lan53} introduced the notion of a \emph{king} in a tournament. A king in a tournament is a vertex $k$ such that for every other vertex $v$, either there is an edge from $k$ to $v$, or there exists a $u$ such that there are edges from $k$ to $u$ and from $u$ to $v$. That is, a king is a vertex such that every other vertex is reachable from it via a path of length at most 2. The notion of a king is a natural generalisation of that of a source: a vertex is a source if every other vertex is reachable from it via a path of length 1. While it is easy to see that a tournament need not have a source, it is also not hard to show that every tournament has at least one king~\cite{Lan53, M80}. Soon after Maurer's popular science article about kings in tournaments~\cite{M80}, Reid~\cite{Reid82} showed existence of tournaments in which every vertex is a king. Tournament solutions/winners are also a natural object of study in social choice theory. See, for instance,~\cite{Dey17} and the references therein. The monograph by Moon~\cite{Moon15} sparked a line of research on tournaments, their solutions, and their structural properties.

A natural computational model to study the complexity of finding tournament solutions and related problems is that of \emph{query complexity}. In this setting, an algorithm has access to the input in the form of \emph{queries}: in one step an algorithm may query the direction of the edge between a pair of vertices of its choosing (in case of an undirected graph problem, the algorithm may query the presence/absence of an edge between a pair of vertices of its choosing). The algorithm's goal is to minimise the number of queries made in the worst case, and all other operations are treated as free. There is a rich literature on the study of graph problems in the setting of query complexity, see, for example,~\cite{Ros73, RV76, Yao87, Haj91, CK01, Dey17, MPS23} and the references therein.

The focus of this paper is on finding a king in a digraph. Given that every tournament has a king as mentioned earlier, it is natural to ask: how efficiently can a king in an $n$-vertex tournament be found? 
The study of this question was initiated by Shen, Sheng, and Wu~\cite{SSW03}, who exhibited an algorithm with query complexity $O(n^{3/2})$, and they also showed a non-matching lower bound of $\Omega(n^{4/3})$. To date, these are the best known bounds on the complexity of finding a king in a tournament, and closing the gap remains an intriguing open problem. There have been some very recent works showing better query bounds for variants of the task of finding a king, or studying the complexity of finding kings in models other than deterministic query complexity~\cite{BJRS22, LRT22, MPS23, MPSS24, AGNS24}. Our first main contribution in this paper to show that if we allow for non-tournament inputs, then the query complexity of even determining existence of a king shoots up.

\begin{theorem}\label{thm: main1}
    The randomized query complexity of determining existence of a king in an $n$-vertex digraph is $\Theta(n^2)$.
\end{theorem}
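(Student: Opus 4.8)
The upper bound is trivial: querying all $\binom{n}{2}$ pairs reveals the entire digraph, after which existence of a king can be decided with no further queries, so $O(n^2)$ queries suffice even deterministically. The content is the $\Omega(n^2)$ randomized lower bound, and the natural route is Yao's minimax principle: exhibit a hard distribution $\mu$ over $n$-vertex digraphs such that any deterministic algorithm making $o(n^2)$ queries errs with constant probability in deciding whether the input has a king. I would split $\mu$ evenly between ``yes'' instances (digraphs containing a king) and ``no'' instances (digraphs with no king), designed so that the two are statistically indistinguishable to an algorithm that has not queried a particular ``critical'' edge (or small set of edges).

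The plan is to build both families on a common scaffold: start from one of the paper's special tournaments in which \emph{every} vertex is a king (the constructions the abstract promises), and then perturb it slightly. Concretely, I expect the ``no'' instance to be obtained by deleting/redirecting a small number of edges so that the resulting digraph has no king at all — for instance, by removing all out-edges from a hidden vertex, or by making some hidden vertex unreachable within two steps from everyone — while the ``yes'' instance is the unperturbed king-rich tournament, or a similar perturbation that preserves at least one king. Because in the base tournament every vertex is a king with lots of ``slack'' (many length-$\le 2$ paths to every other vertex), a localized perturbation hidden at a random location changes the king/no-king status but is invisible until the algorithm happens to probe near the planted location. If the perturbation is supported on a random set of $\Theta(n)$ edges (say, the edges incident to a uniformly random vertex), then an algorithm making $o(n^2)$ queries misses all of them with probability $1-o(1)$, and conditioned on that miss the transcript has the same distribution under the yes- and no-parts of $\mu$; a coupling/transcript argument then forces constant error. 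I would verify this indistinguishability by checking that, for the chosen perturbation, the marginal distribution of the answer on any edge not in the planted set is identical across the two sub-distributions — this is where the homogeneity of the king-rich tournament (e.g. vertex-transitivity or a strong regularity property of the construction) does the real work.

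The main obstacle, and the step I would spend the most care on, is designing the perturbation so that it simultaneously (i) genuinely flips the existence of a king, (ii) is confined to a small, randomly-placed, ``hard-to-find'' set of edges, and (iii) leaves the edge-by-edge answer distribution identical to the other case outside that set. Requirement (iii) is delicate because flipping even one vertex's in/out-edges can be detected by counting arguments if the base tournament is too rigid; conversely making the perturbation too diffuse violates (ii). I anticipate resolving this by using the king-rich tournament's symmetry to argue that ``blinding'' a random vertex (redirecting all its incident edges in a fixed pattern) produces a digraph whose restriction to any fixed vertex subset is distributed exactly as in the unperturbed case until the random vertex is hit, and by a separate combinatorial lemma — again drawing on properties of the special tournament — showing this operation destroys all kings. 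Establishing that combinatorial lemma (that the perturbed graph has no king) is the one genuinely non-routine piece; everything else is a standard Yao-principle packaging once the construction and its properties are in hand.
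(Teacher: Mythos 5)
There is a genuine gap, and it is quantitative rather than cosmetic. Your plan hides the perturbation on the $\Theta(n)$ edges incident to a uniformly random vertex and asserts that an algorithm making $o(n^2)$ queries misses all of them with probability $1-o(1)$. That is false: an algorithm can query one edge incident to every vertex using only $O(n)$ queries (e.g.\ a perfect matching or a star), and if the perturbation redirects all edges at the hidden vertex ``in a fixed pattern'' differing from the base tournament, any single such query exposes it. A perturbation concentrated on one vertex's edge-neighbourhood can therefore yield at best an $\Omega(n)$-type bound, not $\Omega(n^2)$. To get $n^2$ you need the distinguishing information to be a \emph{single} edge whose location ranges over $\Theta(n^2)$ slots, so that each unqueried slot contributes a distinct consistent input on which the algorithm errs. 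A second, smaller issue: since every tournament has a king, your ``no'' instances must leave the tournament world entirely (missing edges, not just redirected ones); your proposal gestures at ``deleting'' edges but does not confront this.

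The paper's construction resolves exactly this. It builds a fixed $2n$-vertex digraph $C$ with \emph{no} king: a balanced tournament $A$ on $n$ vertices, a disjoint copy $A'$ of the tournament $\Delta_n$ (in which every vertex is a king), no edges from $A'$ back to $A$, and edges from all of $A$ to one special vertex $2n-1$ of $A'$. The hard distribution puts mass $1/2$ on $C$ and spreads the rest uniformly over the $n(n-1)$ digraphs $C_{i,j}$ obtained by adding the single edge $i \to (n+j)$. The combinatorial lemma doing the work is that the dominating pairs of $\Delta_n$ are precisely the pairs $\{j, n-1\}$, so a vertex $i$ of $A$ that dominates $2n-1$ together with any one further vertex of $A'$ dominates a dominating pair and becomes a king. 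Since $C$ and $C_{i,j}$ differ in exactly one edge, any algorithm with $n^2/100$ queries leaves $\Theta(n^2)$ of these critical slots unqueried on the all-consistent-with-$C$ path, and whichever answer it outputs there it errs with probability $>1/3$ under the distribution. Your write-up has the right framework (Yao, a fixed/perturbed pair, reliance on balanced tournaments) but is missing this ``one planted edge among $\Theta(n^2)$ positions'' mechanism and the dominating-pair lemma that makes a single added edge suffice to create a king.
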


Relevant to our work is the notion of a \emph{strong king} in a tournament. This notion was defined by Ho and Chang~\cite{HC03} in an attempt to increase the ``fairness'' of the notion of a winner in a tournament.
Let $\delta(u, v)$ denote the number of length-2 paths from $u$ to $v$. A king $k$ is said to be \emph{strong} if for all $w$ with $w \rightarrow k$, we have $\delta(k, w) > \delta(w, k)$. To see why this notion increases the fairness of the definition of a winner, consider the following scenario: say we have a tournament in which a vertex $k$ has a single out-neighbour $u$, and all other vertices are in-neighbours of $k$. Further suppose every vertex other than $k$ and $u$ is an out-neighbour of $u$. The vertex $k$ is a king in this tournament, but not necessarily a ``fair'' one, since it ``defeats'' only one vertex. However, it is not hard to see that $k$ is not a strong king.
The study of strong kings in tournaments has also gained some attention~\cite{chen2008existence, muyinda2020non}. Biswas et al.~\cite[Section~7]{BJRS22} asked if one could devise an algorithm to find a strong king in an $n$-vertex tournament with cost $o(n^2)$. Our second main contribution in this paper is to answer this in the negative.

\begin{theorem}\label{thm: main2}
The randomized query complexity of finding a strong king in an $n$-vertex tournament is $\Theta(n^2)$.
\end{theorem}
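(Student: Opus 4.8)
The plan is to dispatch the upper bound in one line and put all the work into the $\Omega(n^2)$ lower bound, proved via Yao's minimax principle. For the upper bound: query all $\binom{n}{2}$ edges, reconstruct the tournament, and output any strong king (one exists by Ho and Chang~\cite{HC03}); this costs $O(n^2)$ queries. For the lower bound, I would design a hard distribution $\cD$ supported on tournaments, each of which has a strong king (so the search problem is total on the support), and show that any deterministic algorithm making $o(n^2)$ queries errs with probability $\tfrac12-o(1)$ over $\cD$.

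The heart of the argument is the construction of a single ``base'' tournament $A$ on $n$ vertices together with a set $S$ of $\Theta(n^2)$ ``fragile'' edges, all oriented the same way in $A$, with the following properties: (i) $A$ has a \emph{unique} strong king $v^\star$; and (ii) for every $e\in S$, reversing $e$ in $A$ destroys the strong-kingness of $v^\star$ (and, conveniently, keeps every vertex a king, linking back to the all-king tournaments advertised in the abstract). Concretely, I plan to let $v^\star$ have a linear-size in-neighbourhood $W=\{w_1,\dots,w_t\}$ and out-neighbourhood $Z$, wire the internal tournaments on $W$ and on $Z$ to be all-king tournaments, and then choose the $Z$-to-$W$ orientation so that the equality $\delta(v^\star,w_i)=\delta(w_i,v^\star)+1$ holds \emph{exactly} for every $i$ --- a knife's edge. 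Since $\delta(v^\star,w_i)$ counts exactly those $z\in Z$ with $z\to w_i$ and $\delta(w_i,v^\star)$ depends only on the internal tournament on $W$, reversing any edge $z\to w_i$ drops $\delta(v^\star,w_i)$ to $\delta(w_i,v^\star)$ and kills $v^\star$; taking $S$ to be all such reversible $Z$-to-$W$ edges, minus a small (size $O(t\log t)$) set reserved so that each $z\in Z$ retains a dominating out-set in $W$ and hence stays a king, gives $|S|=\Theta(n^2)$.

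Given such an $(A,S)$, define $\cD$ to output $A$ with probability $1/2$, and with probability $1/2$ output $A$ with a uniformly random $e\in S$ reversed; call the latter $A_e$. By (i)--(ii), any strong king of $A$ (namely $v^\star$) fails to be a strong king of any $A_e$. Now fix a deterministic algorithm making $q$ queries: its execution on $A_e$ coincides with its execution on $A$ up to the first time (if ever) it queries $e$, and the $A$-execution probes a fixed set of at most $q$ edges. Hence, in the needle case, the probability over the random $e$ that the algorithm ever learns of the reversal is at most $q/|S|$; otherwise it outputs $v^\star$, which is wrong on $A_e$. For $q=o(n^2)=o(|S|)$ this forces error $\tfrac12-o(1)$ over $\cD$, contradicting the existence of a bounded-error randomized algorithm with $o(n^2)$ queries.

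The main obstacle is the gadget and its verification, everything downstream being the standard needle-in-a-haystack argument. I expect the delicate points to be: showing a $Z$-to-$W$ orientation simultaneously realizing all the equalities $\delta(v^\star,w_i)=\delta(w_i,v^\star)+1$ actually exists (a Gale--Ryser-type feasibility check on the induced bipartite degree sequence, compatible with the per-$z$ domination requirement); proving $v^\star$ is the \emph{unique} strong king of $A$ --- so the algorithm is genuinely forced to output $v^\star$ on $A$ --- which amounts to exhibiting, for every other vertex $u$, an in-neighbour of $u$ witnessing failure of the strong-king inequality; and checking that reversing any $e\in S$ perturbs only the single inequality at the corresponding $w_i$ while leaving every vertex a king (so that the ``keeps everything a king'' clause, hence the clean characterization ``strong king $\equiv$ all in-neighbour inequalities strict,'' survives the reversal).
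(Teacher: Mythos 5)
Your overall framework (Yao's principle, a base tournament plus a needle distribution of single-edge reversals, and ``knife-edge'' equalities $\delta(v^\star,w)=\delta(w,v^\star)+1$ that a single flip destroys) is exactly the strategy of the paper. The parts of your gadget you do specify are sound: $\delta(v^\star,w_i)$ is indeed just the number of $z\in Z$ with $z\to w_i$, $\delta(w_i,v^\star)$ depends only on the tournament internal to $W$, and taking $W$ internally regular and giving each $w_i$ exactly $(|W|+1)/2$ in-neighbours in $Z$ realizes all the equalities at once, with $|S|=\Theta(n^2)$. (The ``keep every vertex a king after the reversal'' clause is unnecessary: a vertex failing the $\delta$-inequality is not a strong king regardless, so you could drop that reservation entirely.)

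The genuine gap is the uniqueness requirement you place on $v^\star$, which your error analysis relies on (``otherwise it outputs $v^\star$'') and which you flag but do not establish. It is a heavy obligation: since every maximum-out-degree vertex is a strong king (Lemma~\ref{lemma:strongkingmaxdegree}), you must first arrange for $v^\star$ to be the strict out-degree maximum --- which already constrains $|W|<|Z|$ and the distribution of the $Z\to W$ edges --- and then separately exhibit, for each of the other $n-1$ vertices, an in-neighbour witnessing failure of its strong-king inequality; nothing in your sketch guarantees this, and for the natural ``balanced'' choices many vertices of $W$ and $Z$ will plausibly be strong kings. The paper sidesteps uniqueness altogether: it uses the vertex-transitive rotational tournament $U_n$ (Construction~\ref{con:universal}), in which \emph{every} vertex is a strong king and \emph{every} domination is tight ($\delta(i,j)-\delta(j,i)=1$ for each edge $j\to i$, Lemma~\ref{lem:difference-one}), and shows that for each vertex $v$ the set $D(v)$ of single edges whose flip kills $v$ has size $(n^2-1)/4$ (Lemma~\ref{lem:size-of-dv}). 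Then whichever vertex the algorithm commits to on the $U_n$-consistent path, a quadratic number of unqueried flips invalidate that vertex, so no uniqueness is needed. Either complete your uniqueness verification or switch to a symmetric base tournament in which every candidate output has its own quadratic kill-set; as written, the argument does not yet close.
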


\subsection{Organization of the Paper}
In Section~\ref{s:preliminaries}, we review basic preliminaries and formalize the computational questions and models of interest. In Section~\ref{s:constructions}, we recall two known constructions of tournaments where every vertex is a king. In Section~\ref{s:lowerbounds}, we show certain properties of these constructions and use these properties to conclude our main results, Theorem~\ref{thm: main1} and Theorem~\ref{thm: main2} (restated formally as Theorem~\ref{thm:randomised-king-existence} and Theorem~\ref{thm:randomised-strong-king}, respectively). Section~\ref{s:lowerbounds} contains the main technical novelty of this paper. Finally, in Section~\ref{s:conclusion}, we recall our main results and discuss potential directions for future work.

\section{Preliminaries}
\label{s:preliminaries}
\subsection{Basic Concepts and Notation}

For a positive integer $n$, let $[n] = \cbra{1, 2, \dots, n}$ and define $[n]_0 = \cbra{0} \cup [n-1]$. We use $S_n$ to denote the set of permutations on the elements in $[n]_0$. For non-negative integers $n$ and $i \in [n]_0$, let $\sigma_i \in S_n$ denote the permutation defined by $\sigma_i(j) = (j + i) \pmod{n}$ for all $j \in [n]_0$. Let $G$ be an $n$-vertex graph with vertex set $[n]_0$, and let $i \in [n]_0$. Define $\sigma(G, i)$ to be the graph obtained by relabelling each vertex $v$ of $G$ by $\sigma_i(v)$.

\begin{defi}[Tournament, Subgraph, Subtournament]
    A \emph{tournament} $T = (V, E)$ is a \emph{complete} directed graph where $V$ is the set of vertices, $E$ is the set of directed edges, and for every edge $(u, v) \in E$, $(v, u) \notin E$. A \emph{subgraph} induced by $G=(V,E)$ on $V' \subseteq V$ is a graph $G[V'] = (V', E')$ such that $E' = E \cap \cbra{(u, v) : u \in V' \textnormal{ and } v \in V'}$. A \emph{subtournament} is analogously defined with respect to tournaments.
\end{defi}

We denote $V(G)$ and $E(G)$ as the vertex and edge sets of $G$, respectively. We often use the notation $(u, v)$ and $u \rightarrow v$ interchangeably. For an edge $e = (u, v)$, we say that $u$ is the \emph{source} of $e$ and $v$ is the \emph{target} of $e$. A vertex $v \in V$ is an \emph{out-neighbor} of $u \in V$ if $(u, v) \in E$, and $v$ is an \emph{in-neighbor} of $u$ if $(v, u) \in E$. We say that a vertex $v$ \emph{dominates} $u$ when $v \to u$. The set of all out-neighbors of a vertex $v$ is denoted by $\Gamma^{+}(v)$, and the set of all in-neighbors of $v$ is denoted by $\Gamma^{-}(v)$. Define $d^+(v) = |\Gamma^+(v)|$ and $d^-(v) = |\Gamma^-(v)|$. A (directed) \emph{path} of length $n$ is an ordered sequence of edges $e_1,e_2, \dots, e_n$ such that the target of $e_i$ is the source of $e_{i+1}$ for all $1 \leq i < n$.

\begin{defi}[King]
\label{def:king}
A \emph{king} in a directed graph is a vertex $v$ such that, for all other vertices $u \neq v$, at least one of the following holds:
\begin{itemize}
    \item $v \to u$ is an edge in the graph, or
    \item there exists a vertex $w$ such that $v \to w$ and $w \to u$ are edges in the graph.
\end{itemize} 
\end{defi}

In other words, a king is a vertex that can reach every other vertex by a path of at most two edges. It is well known~\cite{Lan53} that every tournament has at least one king, and the maximum-out-degree vertex is always a king. A king that dominates all other vertices is called a \emph{source}.

\begin{lemma}[{\cite[Theorem 7]{M80}}]
    \label{lemma:source}
    Let $T = (V, E)$ be an arbitrary tournament. $T$ contains exactly one king if and only if that king is a source.
\end{lemma}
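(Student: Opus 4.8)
The plan is to prove the two directions separately and directly, using only the basic fact (recalled in the introduction) that every nonempty tournament has at least one king.

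For the ``only if'' direction, suppose $T$ has a unique king $k$, and assume toward a contradiction that $k$ is not a source, so that $\Gamma^-(k) \neq \emptyset$. I would apply the ``every tournament has a king'' fact to the subtournament $T[\Gamma^-(k)]$ to obtain a vertex $w \in \Gamma^-(k)$ that is a king of $T[\Gamma^-(k)]$. The key step is then to check that $w$ is in fact a king of the whole tournament $T$: every vertex of $\Gamma^-(k)$ is reachable from $w$ by a path of length at most $2$ lying entirely inside $\Gamma^-(k)$ (since $w$ is a king there); the vertex $k$ itself is reachable in one step because $w \in \Gamma^-(k)$ means $w \to k$; and every vertex $u \in \Gamma^+(k)$ is reachable via the length-$2$ path $w \to k \to u$. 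Since $w \to k$ forces $w \neq k$, this produces a second king of $T$, contradicting uniqueness. Hence $k$ must be a source.

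For the ``if'' direction, suppose the king $k$ is a source, i.e.\ $k \to v$ for every $v \neq k$ (such a $k$ is automatically a king). I would argue uniqueness by contradiction: if $k' \neq k$ were also a king, then $k \to k'$ (as $k$ is a source) gives $k' \not\to k$, so $k'$ cannot reach $k$ in one step, and any length-$2$ path $k' \to u \to k$ would require $u \to k$, which is impossible since the source $k$ has no in-neighbor. Thus no such $k'$ exists, and $k$ is the unique king.

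The only point requiring any care is the ``lifting'' argument in the first direction — verifying that a king of $T[\Gamma^-(k)]$ remains a king once $k$ and its out-neighbors are restored — but, as the three bullet points above show, this is immediate and requires no induction beyond the bare existence statement. Everything else is a short case analysis.
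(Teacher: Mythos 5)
Your proof is correct. Note that the paper itself gives no proof of this lemma --- it is stated with a citation to Maurer~\cite{M80} --- so there is no in-paper argument to compare against; your write-up is a valid self-contained replacement for that citation. Both directions check out: the ``if'' direction correctly observes that a source has no in-neighbours, so no other vertex can reach it by a path of length $1$ or $2$; the ``only if'' direction uses the standard and correct device of taking a king $w$ of the subtournament $T[\Gamma^-(k)]$ (which is nonempty if $k$ is not a source) and lifting it to a king of $T$ via the three cases you list --- the partition $V = \{k\} \cup \Gamma^+(k) \cup \Gamma^-(k)$, which holds because $T$ is a tournament, guarantees those cases are exhaustive, and the paths witnessing $w$'s kingship in the subtournament survive unchanged in $T$. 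The only ingredient you invoke, that every nonempty tournament has a king, is exactly the fact the paper already records from~\cite{Lan53, M80}, so nothing circular is going on. No gaps.
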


Define $\delta(u,v)$ as the number of length-$2$ paths from vertex $u$ to vertex $v$. A king $k$ is said to be \emph{strong} if, for each vertex $v$ such that $v \to k$, we have $\delta(k,v) > \delta(v, k)$. That is, a king $k$ is said to be strong if, for any vertex $v$ that directly dominates it, there are strictly more length-$2$ paths from $k$ to $v$ than there are from $v$ to $k$. Ho and Chang~\cite[Lemma~1]{HC03} observed that every tournament has at least one strong king. We include a proof for completeness.

\begin{lemma}[{\cite[Lemma~1]{HC03}}]
    \label{lemma:strongkingmaxdegree}
    Let $T = (V, E)$ be an arbitrary tournament, and let $u \in V$ be a maximum-out-degree vertex in $T$. Then, $u$ is a strong king.
\end{lemma}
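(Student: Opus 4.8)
The plan is to first establish that $u$ is a king via the standard Landau argument, and then to prove strength by a counting argument that partitions the vertex set according to adjacency with $u$ and with a fixed dominator $v$, and then invokes the maximality of $d^+(u)$.

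For the kingship part: suppose $u$ is not a king. Then there is a vertex $v \neq u$ not reachable from $u$ by a path of length at most $2$. Since $T$ is a tournament, $v \to u$, and moreover for every $w \in \Gamma^+(u)$ we must have $v \to w$ (otherwise $u \to w \to v$ would be a length-$2$ path). Hence $\Gamma^+(v) \supseteq \Gamma^+(u) \cup \cbra{u}$, so $d^+(v) \geq d^+(u) + 1$, contradicting the choice of $u$ as a maximum-out-degree vertex. Thus $u$ is a king.

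For the strength part, fix any $v$ with $v \to u$; we must show $\delta(u,v) > \delta(v,u)$. I would write $V \setminus \cbra{u, v}$ as a disjoint union of four sets according to the directions of the edges to $u$ and to $v$: let $A^- = \Gamma^+(u) \cap \Gamma^-(v)$ (vertices $w$ with $u \to w \to v$), $A^+ = \Gamma^+(u) \cap \Gamma^+(v)$, $B^- = \Gamma^-(u) \cap \Gamma^-(v)$, and $B^+ = \Gamma^-(u) \cap \Gamma^+(v)$ (vertices $w$ with $v \to w \to u$). By definition $\delta(u,v) = |A^-|$ and $\delta(v,u) = |B^+|$. Now count out-neighbors: $d^+(u) = |A^-| + |A^+|$, while $d^+(v) = |A^+| + |B^+| + 1$, the extra $1$ coming from $u \in \Gamma^+(v)$. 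Maximality of $d^+(u)$ gives $|A^-| + |A^+| \geq |A^+| + |B^+| + 1$, i.e. $\delta(u,v) = |A^-| \geq |B^+| + 1 > |B^+| = \delta(v,u)$, as desired.

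I do not expect a genuine obstacle here; the only care needed is the bookkeeping of exactly which vertices are counted by $\delta(u,v)$ versus $\delta(v,u)$ and the off-by-one term accounting for the edge $v \to u$ itself when computing $d^+(v)$. The argument is essentially the degree-maximality trick used for kingship, applied once more to squeeze out the strict inequality.
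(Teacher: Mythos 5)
Your proof is correct and is essentially the paper's argument: your $A^+$ is the paper's set $W$ of common out-neighbours, and the identities $d^+(u)=\delta(u,v)+|W|$ and $d^+(v)=\delta(v,u)+|W|+1$ combined with maximality of $d^+(u)$ are exactly the paper's computation, just stated directly rather than by contradiction. Your explicit kingship step is a harmless addition (the paper leaves it implicit; it also follows from the strength condition itself, since $\delta(u,v)>\delta(v,u)\geq 0$ forces a length-$2$ path to every in-neighbour).
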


\begin{proof}
    Towards a contradiction, suppose $u$ is not a strong king. Then, there must exist some vertex $v$ such that $v \to u$ and $\delta(v,u) \geq \delta(u,v)$.
    Define $W = \cbra{w \in V \setminus \cbra{u, v} : u \rightarrow w \textnormal{ and } v \rightarrow w}$. Note that $d^+(u) = \delta(u, v) + |W|$ and $d^+(v) = \delta(v, u) + |W| + 1 > d^+(u)$, which is a contradiction since we assumed $u$ to be a maximum-out-degree vertex.
\end{proof}

A set of vertices $S \subseteq V$ is a \emph{dominating set} in $T = (V, E)$ if, for every $v \in V$, either $v \in S$ or there exists $w \in S$ such that $w \to v$. A \emph{dominating pair} is a dominating set of exactly two vertices.

\subsection{Formal Definitions of Computational Tasks}


We represent an $n$-vertex graph $G=(V=[n]_0,E)$ as a binary string in $\zone^{\alpha_n}$, where ${\alpha_n}=\binom{n}{2}$ if $G$ is a tournament, or ${\alpha_n}=n^2$ otherwise. An element of $[n]_0$ corresponds to the label of a vertex, and there is one variable ($\cbra{i, j}$) per edge (between vertex $i$ and vertex $j$) that defines its direction. In a tournament, we assume $\alpha_n = \binom{n}{2}$ because for each pair of vertices, there is exactly one directed edge between them.
Define the relation $\king_n \subseteq \zone^{\alpha_n} \times [n]_0$ by
    \[
    (G, v) \in \king_n ~\textnormal{if}~\forall u \in [n]_0\setminus \cbra{v}, ~\textnormal{either}~ v \rightarrow u ~\textnormal{or}~\exists w, v \rightarrow w \rightarrow u,
    \]
and define $K_n \subseteq \zone^{\alpha_n}$ as the language of all $n$-vertex graphs $G$ such that $\exists v\in V$, $(G,v)\in \king_n$. In other words, $K$ is the language of all $n$-vertex graphs containing at least one king. Let $\eking_n \subseteq \zone^{\alpha_n} \times \zone$ be the relation such that $(G,1) \in \eking_n$ if and only if $G \in K_n$.\footnote{Note that we could also define $\eking_n$ as a function from $\zone^{\alpha_n} \to \zone$ where $\eking_n(G) = 1$ iff $G$ contains a king, but we choose to define it as a relation for the sake of consistency with other definitions.} Finally, for $v \in V$, define the relation $\sking_n \subseteq \zone^{\alpha_n} \times [n]_0$ by
    \[
    (G, v) \in \sking_n ~\textnormal{if}~\forall u \in [n]_0~\textnormal{such that}~u \to v, \delta(v,u) > \delta(u,v).
    \]

\subsection{Deterministic Query Complexity}

A deterministic decision tree $\cT$ on $m$ variables is a binary tree where the internal nodes are labelled by variables, and leaves are labelled with elements of a set $\cR$. Each internal node has a left child, corresponding to an edge labelled $0$, and a right child, corresponding to an edge labelled $1$. On an input $x \in \zone^m$, $\cT$'s computation traverses a path from root to leaf as follows: At an internal node, the variable associated with that node is \emph{queried}: if the value obtained is $0$, the computation moves to the left child, otherwise it moves to the right child. The output of $\cT$ on input $x$, denoted by $\cT(x)$, is the label of leaf node reached.

We say that a decision tree $\cT$ \emph{computes} the relation $f \subseteq \zone^{m} \times \cR$ if $(x, \cT(x)) \in \cR$ for all $x \in \zone^m$.
The deterministic query complexity of $f$, is 
\[\sD(f) := \min_{\cT : \cT \text{ computes } f} \textnormal{depth}(\cT).\]
That is, $\sD(f)$ represents the minimum depth of any deterministic decision tree that computes $f$.

\subsection{Randomized Query Complexity}

A randomized decision tree $\mathcal{A}$ is a distribution $\mathcal{D}_{\mathcal{A}}$ over deterministic decision trees. On input $x \in \zone^m$, the computation of $\mathcal{A}$ proceeds by first sampling a deterministic decision tree $\cT$ according to $\mathcal{D}_{\mathcal{A}}$, and outputting the label of the leaf reached by $\cT$ on $x$. We say $\mathcal{A}$ computes $f$ to error $\eps$ if for every input $x$, $\Pr[(x, \cA(x)) \in \cR] \geq 1-\eps$. The randomized query complexity of $f \subseteq \zone^m \times \cR$ is defined as follows:
\[\sR_\eps(f) = \min_{\substack{\mathcal{A} \textrm{~computing~} f \\ \textrm{~with error~} \le \textrm{~} \eps}} ~\max_{\cT: \mathcal{D}_{\mathcal{A}}(\cT) > 0}\textnormal{depth}(\cT).\]

When we drop the subscript $\eps$, we assume $\eps = 1/3$.
We use Yao's minimax principle~\cite{Yao77}, stated below in a form convenient for us.
\begin{lemma}[Yao's Minimax Principle]\label{lem:yao}
    For a relation $f \subseteq \zone^m \times \cR$, we have $\sR_\eps(f) \geq k$ if and only if there exists a distribution $\mu : \zone^m \to [0,1]$ such that $\sD_\mu(f) \geq k$.
    Here, $\sD_\mu(f)$ is the minimum depth of a deterministic decision tree that computes $f$ to error at most $\eps$ when inputs are drawn from the distribution $\mu$.
\end{lemma}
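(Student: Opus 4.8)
The plan is to prove the two implications separately, the ``$\Leftarrow$'' direction being the one we actually use to derive lower bounds (to show $\sR_\eps(f) \ge k$ we will exhibit a single hard distribution $\mu$ and argue $\sD_\mu(f) \ge k$).

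First I would handle ``$\Leftarrow$''. Suppose $\mu$ satisfies $\sD_\mu(f) \ge k$, and assume for contradiction that $\sR_\eps(f) < k$. Then there is a randomized decision tree $\cA$ — a distribution $\cD_\cA$ over deterministic decision trees, each of depth less than $k$ — with $\Pr_{\cT \sim \cD_\cA}[(x, \cT(x)) \in \cR] \ge 1 - \eps$ for every input $x$. The only real step is an averaging (Fubini) argument: since
\[
\mathbb{E}_{\cT \sim \cD_\cA}\,\Pr_{x \sim \mu}\big[(x, \cT(x)) \notin \cR\big] \;=\; \mathbb{E}_{x \sim \mu}\,\Pr_{\cT \sim \cD_\cA}\big[(x, \cT(x)) \notin \cR\big] \;\le\; \eps,
\]
there must be some deterministic tree $\cT^\ast$ in the support of $\cD_\cA$ with $\Pr_{x \sim \mu}[(x, \cT^\ast(x)) \notin \cR] \le \eps$. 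Since $\cT^\ast$ has depth less than $k$, this contradicts $\sD_\mu(f) \ge k$.

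For ``$\Rightarrow$'' I would prove the contrapositive via von Neumann's minimax theorem. Assume $\sD_\mu(f) < k$ for every distribution $\mu$, i.e., every $\mu$ admits a deterministic decision tree of depth less than $k$ that is correct with probability at least $1 - \eps$ under $\mu$. Consider the finite two-player zero-sum game in which the maximizing player picks a deterministic decision tree $\cT$ of depth less than $k$ on the $m$ input variables (a finite strategy set), the minimizing player picks an input $x \in \zone^m$, and the payoff to the maximizer is $1$ if $(x, \cT(x)) \in \cR$ and $0$ otherwise. Our hypothesis says precisely that against every mixed strategy $\mu$ of the minimizer the maximizer has a pure response of value at least $1 - \eps$, so the game has value at least $1 - \eps$; by the minimax theorem there is then a distribution $\cD_\cA$ over trees of depth less than $k$ achieving value at least $1 - \eps$ against every pure strategy of the minimizer, i.e.\ $\Pr_{\cT \sim \cD_\cA}[(x, \cT(x)) \in \cR] \ge 1 - \eps$ for all $x$. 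Reading $\cD_\cA$ as a randomized decision tree, this is a randomized algorithm of worst-case depth less than $k$ computing $f$ to error at most $\eps$, so $\sR_\eps(f) < k$.

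Since this is a standard fact, I do not expect a genuine obstacle; the one point that deserves care is the invocation of the minimax theorem in the second direction. One should observe that both strategy sets are finite — the number of distinct deterministic decision trees of depth less than $k$ on $m$ variables is finite, and $\zone^m$ is finite — so the classical finite minimax theorem applies without further hypotheses, and that a randomized decision tree and a distribution over deterministic decision trees are the same object by definition, so no translation between models is required.
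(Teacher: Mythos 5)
Your proof is correct. Note, though, that the paper does not prove this lemma at all --- it is stated as a known result and attributed to Yao~\cite{Yao77} --- so there is no in-paper argument to compare against. Your two-part proof is the standard one: the averaging argument for the ``$\Leftarrow$'' direction (the only direction the paper actually uses, in Theorems~\ref{thm:randomised-king-existence} and~\ref{thm:randomised-strong-king}) is complete and matches the paper's definition of $\sR_\eps$, under which every tree in the support of $\cD_\cA$ has depth less than $k$; and the von~Neumann minimax argument for ``$\Rightarrow$'' is valid here since $\zone^m$ is finite and, for the relations considered in the paper, the output set $\cR$ is finite as well, so the game is genuinely finite.
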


\section{Balanced Tournaments}
\label{s:constructions}

A \emph{balanced} tournament is one in which every vertex is a king. Maurer~\cite{M80} showed that a balanced tournament exists for every positive integer $n$, except when $n = 2$ or $n = 4$; this result naturally guarantees the existence of a balanced tournament for every odd $n$.

In this section, we present two distinct constructions of balanced tournaments. In subsequent sections, we analyze key properties of these constructions that help us prove our bounds.
We clarify here that both of these constructions have appeared in prior works, but the properties that we require of them have not been analyzed earlier, to the best of our knowledge.

\begin{construction}[{\cite[Lemma~7]{M80}}]
\label{con:triangle}Let $n$ be an odd positive integer. Define $\Delta_n$ as a tournament on $n$ vertices, labelled by elements of $[n]_0$, recursively constructed as follows:
\begin{itemize}
    \item If $n = 1$, $\Delta_n$ consists of a single vertex labelled $0$.
    \item If $n > 1$, build $\Delta_{n-2}$ as a subtournament induced by $[n-2]_0$. Then direct the edges such that:
    \begin{enumerate}
        \item for each vertex $v$ in $[n-2]_0$, $(n-1) \to v$ and $v \to (n-2)$, and
        \item $(n-2) \to (n-1)$.
    \end{enumerate}
\end{itemize}
\end{construction}

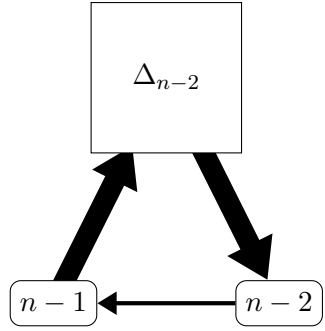
\begin{figure}[!ht]
    \centering
    \begin{tikzpicture}[baseline=(current bounding box.center)]
    \node[shape=rectangle,rounded corners, draw=black, minimum size=6mm] (1) at (3,0){$n-2$};
    \node[shape=circle, minimum size=1em,rounded corners, draw=white, minimum size=2cm] (2) at (1.5,3){};
    \draw[-{Triangle[width=18pt,length=15pt]}, line width=8pt] (0,0) to (2);
    \draw[-{Triangle[width=18pt,length=15pt]}, line width=8pt] (1.5,3) to (1);
    \node[shape=rectangle, minimum size=1em, fill=white,draw=black, minimum size=2cm] (2p) at (1.5,3){$\Delta_{n-2}$};
    \node[shape=rectangle,rounded corners, fill=white, draw=black, minimum size=6mm] (0) at (0,0){$n-1$};
    \draw[>=Triangle, ->, ultra thick] (1) to (0);
\end{tikzpicture}
\vspace{0.5em}
    \caption{Illustration of Construction \ref{con:triangle}.}
    \label{fig:construction-triangle}
\end{figure}

Figure \ref{fig:construction-triangle} illustrates the construction:\footnote{The shape of the figure may clarify the choice of `$\Delta$'.} vertex $n-1$ is dominated by vertex $n-2$ and dominates all vertices in $\Delta_{n-2}$, and vertex $n-2$ dominates vertex $n-1$ and is dominated by all vertices in $\Delta_{n-2}$. For the rest of this paper, we use $\Delta_n$ to denote the $n$-vertex tournament as defined in Construction~\ref{con:triangle}. While the following was already shown by Maurer~\cite[Lemma~7]{M80}, we include a proof for completeness.

\begin{lemma}\label{the:triangle-balanced}
Let $n$ be an odd positive integer. The tournament $\Delta_n$ is balanced.
\end{lemma}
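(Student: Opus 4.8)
The plan is to prove this by induction on the odd integer $n$, matching the recursive structure of Construction~\ref{con:triangle}. The base case $n=1$ is trivial: the single vertex $0$ vacuously reaches every other vertex, so it is a king. For the inductive step, assume $\Delta_{n-2}$ is balanced (every vertex in $[n-2]_0$ is a king of $\Delta_{n-2}$), and argue that every vertex of $\Delta_n$ is a king of $\Delta_n$. There are three kinds of vertices to handle: vertex $n-1$, vertex $n-2$, and an arbitrary vertex $v \in [n-2]_0$ coming from the recursively-built subtournament.

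First I would dispatch the two special vertices. Vertex $n-1$ dominates every vertex in $[n-2]_0$ directly (by clause~1 of the construction), and it reaches $n-2$ via the length-$2$ path $n-1 \to v \to (n-2)$ for any $v \in [n-2]_0$ (such a $v$ exists since $n-2 \geq 1$ when $n \geq 3$); hence $n-1$ is a king. Vertex $n-2$ dominates $n-1$ directly (clause~2), and for any $v \in [n-2]_0$ it reaches $v$ via the path $(n-2) \to (n-1) \to v$ (using clauses~2 and~1); hence $n-2$ is a king. The slightly more delicate case is a vertex $v \in [n-2]_0$. By the inductive hypothesis, $v$ reaches every other vertex of $[n-2]_0$ within two steps \emph{inside} $\Delta_{n-2}$, and since $\Delta_n[[n-2]_0] = \Delta_{n-2}$, those same paths exist in $\Delta_n$. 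It remains to check that $v$ reaches $n-1$ and $n-2$ in at most two steps. We have $v \to (n-2)$ directly (clause~1), so $n-2$ is covered. For $n-1$: $v$ does not dominate $n-1$ (since $n-1 \to v$), but $v \to (n-2) \to (n-1)$ is a length-$2$ path (clause~1 gives $v \to (n-2)$, clause~2 gives $(n-2) \to (n-1)$), so $n-1$ is reached. Thus $v$ is a king, completing the induction.

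I do not anticipate a genuine obstacle here — the argument is a routine structural induction, and the construction was evidently designed so that the two new vertices form a ``stabilizing triangle'' with any vertex of $\Delta_{n-2}$. The only points requiring a moment's care are (i) confirming that the relevant vertices exist, i.e. that $[n-2]_0$ is nonempty, which holds for all odd $n \geq 3$, and (ii) being explicit that the subgraph induced on $[n-2]_0$ is exactly $\Delta_{n-2}$ so that the inductive hypothesis transfers verbatim (this is immediate from the definition of induced subtournament and the fact that the construction only \emph{adds} edges incident to $n-2$ and $n-1$). With those observations in place the proof is complete.
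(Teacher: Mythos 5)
Your proof is correct and follows essentially the same route as the paper's: induction on odd $n$ matching the recursive construction, checking separately that the two new vertices and the old vertices remain kings via the paths $n-1 \to v \to n-2$, $(n-2)\to(n-1)\to v$, and $v \to (n-2) \to (n-1)$. Your added care about $[n-2]_0$ being nonempty and the induced subtournament equalling $\Delta_{n-2}$ is a harmless (and slightly more explicit) refinement of the same argument.
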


\begin{proof}
    Let us proceed by induction on the number of vertices $n$.

    When $n=1$, $\Delta_n$ is clearly a balanced tournament, as its single vertex trivially satisfies the condition of being a king. Indeed, since there are no other vertices, it vacuously reaches all other vertices by a path of length 1 or 2.

    Now, suppose the lemma holds for some $n$, we show that augmenting the tournament by two vertices following the instructions of Construction \ref{con:triangle} does not violate the invariant; that is, $\Delta_{n+2}$ still remains balanced. By the induction hypothesis, every vertex in $[n]_0$ reaches all other vertices within that set via a path of length 1 or 2. When two new vertices, $n+1$ and $n$, are added:
    \begin{itemize}
        \item vertex $n+1$ dominates all vertices in $[n]_0$;
        \item all vertices in $[n]_0$ dominate $n$; and
        \item vertex $n$ dominates $n+1$.
    \end{itemize} 
    Clearly, none of the vertices in $[n]_0$ loses its status as a king, as each vertex in $[n]_0$ reaches $n$ directly and reaches $n+1$ via $n$. Furthermore, the newly added vertices also satisfy the king condition:
    \begin{itemize}
        \item vertex $n+1$ reaches $n$ via one of the vertices in $[n]_0$, and $n$ reaches $n+1$ directly; and
        \item both $n+1$ and $n$ reach all vertices in $[n]_0$ (either directly or via $n$).
    \end{itemize}
    Thus, the property is preserved and $\Delta_{n+2}$ remains balanced.
\end{proof}

The following construction has appeared in the literature as an example of a \emph{rotational tournament}~\cite[Section~2]{FLRM98}.
\begin{construction}
    \label{con:universal}
    Let $n$ be an odd positive integer. Define $U_n$ as a tournament on $n$ vertices, labelled by elements of $[n]_0$. For each pair of vertices $i, j$ with $i < j$, direct the edge as follows:
    \begin{itemize}
        \item If $i + j$ is odd, direct the edge from $i$ to $j$ ($i \to j$).
        \item Otherwise, direct the edge from $j$ to $i$ ($j \to i$).
    \end{itemize}
\end{construction}

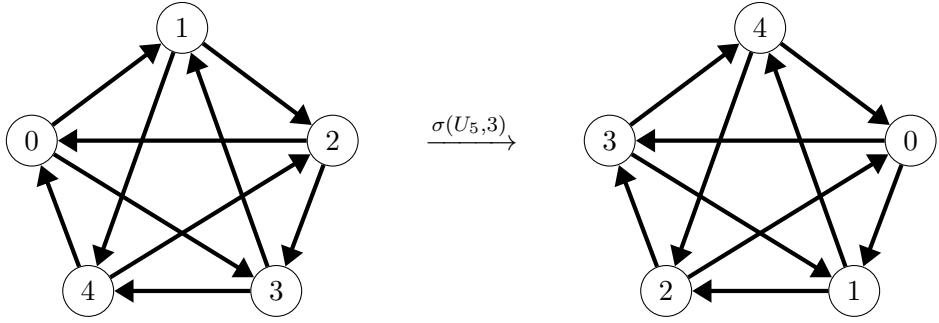
\begin{figure}[!ht]
    \centering
    \begin{tikzpicture}[baseline=(current bounding box.center)]
    \node[shape=circle,rounded corners, draw=black, minimum size=6mm] (0) at (0,0){$0$};
    \node[shape=circle,rounded corners, draw=black, minimum size=6mm] (1) at (2,1.5){$1$};
    \node[shape=circle,rounded corners, draw=black, minimum size=6mm] (2) at (4,0){$2$};
    \node[shape=circle,rounded corners, draw=black, minimum size=6mm] (3) at (3.25,-2){$3$};
    \node[shape=circle,rounded corners, draw=black, minimum size=6mm] (4) at (0.75,-2){$4$};
    \draw[>=Triangle, ->, ultra thick] (0) edge (1) (0) edge (3);
    \draw[>=Triangle, ->, ultra thick] (1) edge (2) (1) edge (4);
    \draw[>=Triangle, ->, ultra thick] (2) edge (3) (2) edge (0);
    \draw[>=Triangle, ->, ultra thick] (3) edge (4) (3) edge (1);
    \draw[>=Triangle, ->, ultra thick] (4) edge (2) (4) edge (0);
\end{tikzpicture}
\hspace{0.5cm}
\begin{tikzpicture}
    \node (0) at (0,0){$\xrightarrow{\sigma(U_5,3)}$};
\end{tikzpicture}
\hspace{0.5cm}
\begin{tikzpicture}[baseline=(current bounding box.center)]
    \node[shape=circle,rounded corners, draw=black, minimum size=6mm] (0) at (0,0){$3$};
    \node[shape=circle,rounded corners, draw=black, minimum size=6mm] (1) at (2,1.5){$4$};
    \node[shape=circle,rounded corners, draw=black, minimum size=6mm] (2) at (4,0){$0$};
    \node[shape=circle,rounded corners, draw=black, minimum size=6mm] (3) at (3.25,-2){$1$};
    \node[shape=circle,rounded corners, draw=black, minimum size=6mm] (4) at (0.75,-2){$2$};
    \draw[>=Triangle, ->, ultra thick] (0) edge (1) (0) edge (3);
    \draw[>=Triangle, ->, ultra thick] (1) edge (2) (1) edge (4);
    \draw[>=Triangle, ->, ultra thick] (2) edge (3) (2) edge (0);
    \draw[>=Triangle, ->, ultra thick] (3) edge (4) (3) edge (1);
    \draw[>=Triangle, ->, ultra thick] (4) edge (2) (4) edge (0);
\end{tikzpicture}
\vspace{0.5em}
    \caption{The tournament $U_5$ and the tournament $\sigma(U_5, 3) \in \mathrm{Aut}(U_5)$ obtained by rotating the labels cyclically clockwise by 3 positions.}
    \label{fig:construction-universal}
\end{figure}

An example is depicted in Figure \ref{fig:construction-universal} when $n=5$. We now show that $U_n$ is balanced for any odd positive integer $n$. Again, for the rest of this paper, we use $U_n$ to denote the $n$-vertex tournament as defined in Construction~\ref{con:universal}.

\begin{lemma}\label{the:universal-balanced}
Let $n$ be an odd positive integer. The tournament $U_n$ is balanced.
\end{lemma}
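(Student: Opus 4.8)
The plan is to prove something cleaner than chasing the king property vertex-by-vertex: I will show that $U_n$ is a \emph{regular} tournament, in which every vertex has out-degree exactly $(n-1)/2$, and then deduce the balanced property from the fact recorded after Definition~\ref{def:king} that a maximum-out-degree vertex of any tournament is always a king. Since regularity forces every vertex to have the (common, hence maximum) out-degree, every vertex is a king, which is exactly the balanced condition.

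First I would fix an arbitrary vertex $v \in [n]_0$ and count its out-neighbours directly from the edge rule of Construction~\ref{con:universal}. Unwinding that rule, a vertex $u \neq v$ satisfies $v \to u$ precisely when either $u > v$ and $u, v$ have opposite parities, or $u < v$ and $u, v$ have the same parity. I would then split the count of out-neighbours of $v$ into these two groups. Writing $v = 2k$ in the even case and $v = 2k+1$ in the odd case, the number of larger out-neighbours works out to $(n-1)/2 - k$ and the number of smaller out-neighbours works out to $k$, so the two contributions telescope to $d^+(v) = (n-1)/2$ independently of $v$. Oddness of $n$ is used only to make the counts of even and odd labels in $[n]_0$ come out evenly; I would treat the even-$v$ and odd-$v$ cases in parallel, since the arithmetic is symmetric.

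Having established $d^+(v) = (n-1)/2$ for every $v$, the in-degrees are then forced: $d^-(v) = (n-1) - d^+(v) = (n-1)/2$ as well, so $U_n$ is regular. To conclude, I would invoke the stated fact that the maximum-out-degree vertex of a tournament is a king. As every vertex of $U_n$ attains the common out-degree $(n-1)/2$, each vertex is simultaneously a maximum-out-degree vertex and therefore a king; hence every vertex of $U_n$ is a king, i.e.\ $U_n$ is balanced. The main obstacle is purely the bookkeeping in the out-degree count, namely correctly enumerating the odd and even labels lying above and below $v$ in $[n]_0$ and confirming that the two parity-based contributions cancel the dependence on $v$; this becomes routine once the edge rule is rewritten in the ``opposite parity above / same parity below'' form.
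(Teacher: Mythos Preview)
Your proposal is correct, and in fact the paper itself explicitly flags your approach as a valid one just before the statement: it notes that one can show every vertex has out-degree $(n-1)/2$ and then invoke the maximum-out-degree-vertex-is-a-king fact. The paper then opts instead for a ``from first principles'' argument: it observes that $\Gamma^+(i)$ consists of smaller same-parity and larger opposite-parity vertices, and then shows directly that $\Gamma^-(i)=\Gamma^+(i+1)$ (with a special case for $i=n-1$ via vertex $0$), so $i$ reaches every non-out-neighbour through $i+1$. Your regularity route is shorter and cleanly separates the combinatorial count from the king conclusion; the paper's route has the mild advantage of exhibiting the actual length-$2$ paths, which later proofs in the paper (e.g., Lemma~\ref{lem:difference-one}) also exploit.
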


One way to prove Lemma \ref{the:universal-balanced} is to show that each vertex in $U_n$ (for an odd positive $n$) dominates exactly $\frac{n-1}{2}$ vertices. Thus, given that a maximum-out-degree vertex in a tournament is a king~\cite{Lan53}, it follows that all vertices are kings. For completeness, we include a proof from first principles below.

\begin{proof}[Proof of Lemma~\ref{the:universal-balanced}]
    Observe that each vertex $i$ dominates all lesser vertices of similar parity and all greater vertices of opposite parity. Define the parity function $p(x) = x \pmod{2}$; formally, we have $\Gamma^+(i)=\{j \in [n]:j<i \text{ and } p(j)=p(i)\} \cup \{j \in [n] : j > i \text{ and } p(i) \neq p(j)\}$.

    Consider any vertex $i < n-1$. Since $p(i+1) \neq p(i)$, vertex $i$ dominates $i+1$. Thus, $i+1$ in turn dominates all vertices less than $i$ of opposite parity and all greater vertices of the same parity; formally, $\Gamma^-(i)=\Gamma^+(i+1)$. Hence, $i$ reaches all vertices by a path of at most two edges either through direct domination, or via vertex $i+1$.

    For $i = n-1$, note that $n-1$ is even, so it dominates all even vertices, including vertex $0$. Since vertex $0$ dominates all odd vertices, and $n-1$ dominates $0$, it follows that $n-1$ is also a king.
    Therefore, every vertex in $U_n$ is a king, and the tournament is balanced.
\end{proof}

An interesting property of $U_n$ is its symmetry. Indeed, observe that rotating the tournament (i.e., shifting its labels) clockwise or anti-clockwise results in the same labelled tournament. This is useful as it allows a property of one vertex to be extended to all vertices in $U_n$.

\begin{lemma}[\cite{FLRM98}]
    \label{lm:symmetry}
    Let $n$ be an odd positive integer. For $i\in \mathbb{Z}$, we have $\sigma(U_n,i) \in \mathrm{Aut}(U_n)$.
\end{lemma}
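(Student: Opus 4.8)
The plan is to show that the cyclic shift $\sigma_i$ preserves the edge-orientation rule of Construction~\ref{con:universal}, so that $\sigma(U_n,i)$ is literally the same labelled tournament as $U_n$, hence an automorphism. First I would note that it suffices, by composition, to handle $i=1$ (the shift $j\mapsto j+1 \pmod n$), since $\sigma_i = \sigma_1^{\,i}$ and $\mathrm{Aut}(U_n)$ is a group; and I would also reduce to $i \in [n]_0$ since $\sigma_i$ only depends on $i \bmod n$.

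Next I would recall the explicit description of edges established in the proof of Lemma~\ref{the:universal-balanced}: for $i<j$, the edge goes $i\to j$ iff $i+j$ is odd, and $j\to i$ iff $i+j$ is even. The key observation is that this rule is ``almost'' a function of $i+j \bmod 2$ together with the order of $i$ and $j$ — and the only place the order matters is the single wrap-around pair. Concretely, for two distinct labels $a,b \in [n]_0$ with $a \ne b$, I would show that $a \to b$ in $U_n$ iff $b - a \pmod n \in \{1,3,5,\dots,n-2\}$, i.e. iff $b-a \bmod n$ is odd (this uses that $n$ is odd, so exactly $(n-1)/2$ of the nonzero residues are odd, matching the out-degree computed earlier). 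The cleanest route: check that when $a<b$, $b-a$ is odd $\iff a+b$ is odd, which handles the ``non-wrap'' direction directly; and the wrap case $b-a+n$ has the same parity as $b-a$ precisely because $n$ is odd. So the orientation depends only on $b-a \bmod n$, which is invariant under the simultaneous shift $a \mapsto a+1, b\mapsto b+1$. Therefore $\sigma_1$ maps every edge of $U_n$ to an edge of $U_n$ with the same orientation, i.e. $\sigma(U_n,1) = U_n$ as labelled graphs, so $\sigma_1 \in \mathrm{Aut}(U_n)$, and by composition $\sigma_i \in \mathrm{Aut}(U_n)$ for all $i \in \mathbb{Z}$.

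The main obstacle — really the only subtlety — is the wrap-around edge, i.e. making sure the parity criterion ``$a+b$ odd'' when rephrased as ``$b-a \bmod n$ odd'' behaves correctly when the shift pushes a label past $n-1$ back to $0$. This is exactly where oddness of $n$ is used: if $a+b$ and $b-a$ always had the same parity that would be trivial over $\mathbb{Z}$, but after reducing mod $n$ we need $n$ odd so that reduction mod $n$ preserves parity on the relevant range. I would therefore state the reformulation ``$a\to b \iff (b-a \bmod n)$ is odd'' as the central claim, prove it by the short case analysis above, and then the automorphism property is immediate. I expect the whole argument to be a few lines once the reformulation is in place.
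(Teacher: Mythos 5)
The paper gives no proof of this lemma---it is cited directly from \cite{FLRM98}---so there is nothing internal to compare against; your proposal is the natural argument. Your central reformulation, that $a \to b$ in $U_n$ if and only if $(b-a) \bmod n$ is odd, is correct, and once it is in place the automorphism property is indeed immediate, since $b-a \bmod n$ is invariant under the simultaneous shift of both labels by $i$.

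However, your justification of the wrap-around case is backwards, and taken literally it would contradict your own claim. You assert that ``$b-a+n$ has the same parity as $b-a$ precisely because $n$ is odd'' (and later that reduction mod $n$ ``preserves parity''); adding an odd number \emph{flips} parity, so for $a>b$ the residue $(b-a)\bmod n = b-a+n$ has the \emph{opposite} parity to $b-a$. If reduction really preserved parity, your criterion would give ``$a \to b$ iff $a+b$ odd'' in the wrap case as well, whereas Construction~\ref{con:universal} orients $a \to b$ (for $a>b$) exactly when $a+b$ is \emph{even}. The correct accounting is that two parity flips cancel: for $a>b$ the orientation rule reads ``$a \to b$ iff $a+b$ is even,'' i.e.\ iff $b-a$ is even, i.e.\ iff $b-a+n$ is odd---and it is here that the oddness of $n$ supplies exactly the flip needed to match the reversal of the rule when the order of the endpoints is swapped. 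With that one line corrected, the proof is complete.
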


\section{Lower Bounds}
\label{s:lowerbounds}
In this section, we prove our two main results: determining the existence of a king in an arbitrary $n$-vertex digraph (Subsection \ref{ss:kings-digraphs}) and finding a strong king in a tournament (Subsection \ref{ss:strong-kings-tournaments}) both admit a (randomized) query complexity of $\Theta(n^2)$. To do so, we make use of the constructions from Section \ref{s:constructions}.

\subsection{Kings in Directed Graphs}
\label{ss:kings-digraphs}


Below is a property we require of Construction~\ref{con:triangle}. Recall that a 2-element-set of vertices $\cbra{v, w}$ is said to be a \emph{dominating pair} in $T = (V, E)$ if, for every $u \in V \setminus \cbra{v, w}$, there exists $x \in \cbra{v, w}$ such that $x \to u$.

\begin{lemma}\label{lem:triangle-domipairs}

Let $n > 3$ be an odd positive integer, and let $i, j \in [n]_0$ with $i < j$. The set of vertices $\{i, j\}$ is a dominating pair in $\Delta_n$ if and only if $j = n - 1$.
\end{lemma}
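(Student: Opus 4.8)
The plan is to work directly with the recursive structure of $\Delta_n$ from Construction~\ref{con:triangle}. I will first establish the ``if'' direction: if $j = n-1$, then $\{i, n-1\}$ dominates everything. Recall that $n-1$ dominates all of $[n-2]_0$, so the only vertex not dominated by $n-1$ is $n-2$ itself; but $i \in [n-2]_0$, and by construction every vertex of $[n-2]_0$ dominates $n-2$, so $i \to (n-2)$. Hence $\{i, n-1\}$ is a dominating pair regardless of the choice of $i < n-1$.

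For the ``only if'' direction, suppose $j \neq n-1$, so that both $i, j \in [n-2]_0$ (since $i < j \le n-1$ and $j \neq n-1$ forces $j \le n-2$, and if $j = n-2$ then since $j$ is dominated by everyone in $[n-2]_0$ and by nobody except $n-1$... I should be careful here). I would split into the case $j = n-2$ and the case $j \le n-3$. If $j \le n-3$, then neither $i$ nor $j$ can dominate the vertex $n-1$: by construction $n-1 \to v$ for every $v \in [n-2]_0$, so $n-1$ has a unique in-neighbor, namely $n-2$, which is excluded since $j \le n-3 < n-2$. Thus $n-1$ is undominated by $\{i,j\}$ and the pair fails. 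If $j = n-2$, then I need to find an undominated vertex: $n-2$ dominates only $n-1$, and $i \in [n-3]_0$; the vertex $n-1$ is dominated by $n-2 = j$, so that is covered, but now I should look inside $\Delta_{n-2}$. The pair $\{i, n-2\}$ restricted to $[n-2]_0$ behaves like $\{i\}$ alone (since $n-2$ dominates nothing in $[n-3]_0$), so I need $\{i\}$ to dominate all of $[n-3]_0 \setminus \{i\}$, i.e., $i$ would have to be a source of $\Delta_{n-3}$... wait, $\Delta_{n-2}$ restricted to $[n-3]_0$ is not $\Delta_{n-3}$ since $n-3$ is odd-indexed. Let me instead argue: inside $\Delta_{n-2}$, by the already-proven ``if'' direction applied one level down (or by a parallel analysis), no single vertex dominates all others unless $\Delta_{n-2}$ has a source; but $\Delta_{m}$ for $m \ge 3$ is balanced (Lemma~\ref{the:triangle-balanced}) and has more than one king, so by Lemma~\ref{lemma:source} it has no source. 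Hence $i$ fails to dominate some vertex $u \in [n-3]_0$, and since $n-2 = j$ does not dominate $u$ either (it only dominates $n-1$), the pair $\{i, n-2\}$ is not dominating. This handles all $j \neq n-1$.

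\textbf{Main obstacle.} The delicate case is $j = n-2$, where the failure is not ``visible'' at the top level of the recursion and must be traced into the subtournament $\Delta_{n-2}$; the cleanest way to close it is to invoke that $\Delta_m$ has no source for $m > 1$ (a consequence of Lemma~\ref{the:triangle-balanced} together with Lemma~\ref{lemma:source}, using $n > 3$ so that $\Delta_{n-2}$ has at least $3$ vertices and hence more than one king). I should double-check the parity bookkeeping — $n$ odd means $n-2$ is odd and $n-1$ is even, and the recursion removes vertices $n-1, n-2$ to leave the odd-order tournament $\Delta_{n-2}$, so everything stays consistent. A minor point to verify is the edge case interactions when $n = 5$ (so $\Delta_{n-2} = \Delta_3$), ensuring the ``no source'' claim genuinely applies.
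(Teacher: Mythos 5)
Your proof is correct and follows essentially the same route as the paper's: the ``$j=n-1$'' direction via $n-1$ dominating everything except $n-2$ (which $i$ dominates), the ``$j\le n-3$'' case via $n-1$ having $n-2$ as its unique in-neighbor, and the delicate ``$j=n-2$'' case by forcing $i$ to be a source of the balanced subtournament $\Delta_{n-2}$ and deriving a contradiction from Lemma~\ref{the:triangle-balanced} together with Lemma~\ref{lemma:source}. The brief detour about $[n-3]_0$ versus $[n-2]_0$ is only a notational slip and does not affect the argument.
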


\begin{proof}
We show both the directions below.
\begin{itemize}
    \item Let us first show the implication ($\Rightarrow$): that is, any pair $\cbra{i,j}$ where $j = n-1$ is dominating. According to Construction \ref{con:triangle}, and as depicted in Figure \ref{fig:construction-triangle}, vertex $n-1$ dominates all vertices except $n-2$. However, it is easy to see that all other vertices are either $n-2$ or dominate $n-2$. Therefore, pairing $n-1$ with any of those vertices produces a dominating pair.

    \item To show the implication in the other direction ($\Leftarrow$), assume, for the sake of contradiction, that $j \neq n-1$ and $\cbra{i,j}$ is a dominating pair. Then, $i \in [n-2]_0$ and either $j \in [n-2]_0$, or $j = n-2$ (recall that $i < j$). 

    \begin{itemize}
        \item If $j \in [n-2]_0$, then $\cbra{i,j}$ does not dominate $n-1$, a contradiction. 
        \item If $j = n-2$, then $i \in [n-2]_0$ must dominate all other vertices in $[n-2]_0$, and thus must be a source in $\Delta_n[[n-2]_0]$. However, since all (of the at least 3) vertices in $\Delta_n[[n-2]_0]$ are kings by Lemma \ref{the:triangle-balanced}, this contradicts Lemma~\ref{lemma:source}. Thus, the pair $\cbra{i,j}$ is not a dominating pair.
    \end{itemize}
\end{itemize}
This concludes the proof.
\end{proof}

In other words, the only dominating pairs in $\Delta_n$ are $\cbra{0, n-1}, \cbra{1, n-1}, \dots, \cbra{n-2, n-1}$. Using the above property, we show below that the deterministic and randomized query complexity of finding a king in an arbitrary $n$-vertex digraph is $\Theta(n^2)$.

\begin{theorem}
    \label{thm:randomised-king-existence}
    For all positive integers $n$, $\sR(\emph{\eking}_n)=\Theta(n^2)$ for general digraphs.
\end{theorem}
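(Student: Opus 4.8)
The upper bound is trivial: any digraph on $n$ vertices is described by $n^2$ bits, so $\sR(\eking_n) = O(n^2)$. The work is in the $\Omega(n^2)$ lower bound, and by Yao's minimax principle (Lemma~\ref{lem:yao}) it suffices to exhibit a distribution $\mu$ over $n$-vertex digraphs on which every deterministic decision tree of depth $o(n^2)$ errs with probability more than $1/3$ in deciding whether a king exists. The natural plan is to build $\mu$ so that the algorithm cannot distinguish a ``yes'' instance from a ``no'' instance without reading almost all the edges. I would take a balanced tournament as the backbone of the ``yes'' instances: since in $\Delta_n$ (or $U_n$) every vertex is a king, such a graph is deeply in $K_n$, and crucially it stays in $K_n$ even after a few edges are flipped or deleted. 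For the ``no'' instances I want a digraph with no king at all; a clean choice is to take the balanced tournament and reverse \emph{all} edges incident to some vertex, or more flexibly to install a small ``trap'' structure that no vertex can escape in two steps. The point of using $\Delta_n$ here is Lemma~\ref{lem:triangle-domipairs}: the only dominating pairs are $\{i,n-1\}$, so killing the king property globally requires only a localized, hard-to-detect modification.

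Concretely, here is the construction I would try. Fix the balanced tournament $\Delta_n$. For each ordered pair $(a,b)$ of distinct vertices with $b \notin \{n-1\}$ and $a \ne n-1$ — or some similarly chosen index set of size $\Theta(n^2)$ — define a ``planted'' digraph $G_{a,b}$ that agrees with $\Delta_n$ on all but a single edge (say the edge between $a$ and $b$ is deleted, or reversed, or made bidirected), chosen so that $G_{a,b}$ still has a king; and a ``spoiled'' digraph $G'_{a,b}$ that additionally reverses the orientation of \emph{every} edge out of a fixed small set $S$ of vertices so that no vertex can reach $S$ in two steps, making $G'_{a,b} \notin K_n$. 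I would arrange that $G_{a,b}$ and $G'_{a,b}$ differ only in the $O(1)$ or $O(n)$ edges of the spoiling gadget, and that the location of the gadget is uniformly random and independent of the ``query-visible'' part. Then a decision tree making $o(n^2)$ queries, on a uniformly random such pair, fails to query any edge of the planted gadget with probability $1-o(1)$, hence cannot tell $G_{a,b}$ from $G'_{a,b}$ and must guess. Drawing $\mu$ as the uniform mixture of the yes-family and the no-family forces error $\ge 1/2 - o(1) > 1/3$. The role of Lemma~\ref{lem:triangle-domipairs} is to certify that after the single planted edge-flip there is \emph{still} a king in the yes-instance (the only way a two-vertex set dominates is via $n-1$, and we never touch those pairs), and more importantly that in the no-instance the absence of a king is robust — i.e., no $o(n^2)$ queries can rule out the existence of a king because a king could be hiding behind any unqueried edge.

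\textbf{The main obstacle} I anticipate is designing the no-instances so that (i) they genuinely have no king, yet (ii) they are edge-by-edge indistinguishable from yes-instances until $\Omega(n^2)$ queries are made. These two requirements pull against each other: destroying kinghood globally is a ``global'' property, but for an adversary argument each individual query must leave both possibilities open. The fix is to make the \emph{yes/no decision} depend on a set of $\Theta(n^2)$ edges in a way resembling an OR or a unique-disjointness-type promise — for instance, the digraph has a king iff a certain planted perfect-matching-like pattern is present, and each edge query reveals the status of only one coordinate of that pattern. Lemma~\ref{lem:triangle-domipairs} is exactly the combinatorial lever that lets a single, well-placed edge switch between ``balanced tournament, every vertex a king'' and ``no dominating pair survives, hence no king,'' because it pins down precisely which pairs are dominating. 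I would spend the bulk of the proof verifying (a) that the yes-family members are in $K_n$ — immediate from balancedness plus Lemma~\ref{lem:triangle-domipairs} controlling the effect of the flip — and (b) that the no-family members have no king, which is where the careful choice of the spoiling gadget and a short case analysis over candidate kings (using the recursive structure of $\Delta_n$) does the real work. Once those two facts are in hand, the information-theoretic/adversary step that $o(n^2)$ queries cannot separate the families is routine.
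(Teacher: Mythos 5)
Your high-level framework (Yao's principle, a balanced tournament as backbone, Lemma~\ref{lem:triangle-domipairs} as the lever, and an OR-like structure over $\Theta(n^2)$ edges) matches the paper's, but the concrete construction has two serious flaws that would sink the proof as written. First, the polarity of your hard distribution is inverted. You take the \emph{yes}-instance (a balanced tournament) as the base and try to plant a hidden ``spoiling gadget'' that destroys all kings. But a gadget that eliminates every king is a global object with only $O(n)$ natural placements (e.g., which vertex or small set $S$ it sits on), and each placement is detectable with $O(1)$ queries to an edge the gadget would reverse; ruling out all placements then costs $O(n)$, not $\Omega(n^2)$. Your claim that $o(n^2)$ queries miss the gadget with probability $1-o(1)$ would need the gadget to live in one of $\Theta(n^2)$ spread-out positions each requiring its own query, which your description does not provide. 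The working construction does the opposite: the \emph{no}-instance $C$ is the base, and each yes-instance $C_{i,j}$ is obtained by \emph{adding} a single edge $i \to (n+j)$ at one of $n(n-1)$ locations; certifying ``no king'' then genuinely forces the algorithm to verify the absence of almost all of these edges, which is where the $\Omega(n^2)$ comes from.

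Second, your no-instances are never shown to lack a king, and as described they cannot: reversing edges of a tournament (whether a single edge or all edges out of a set $S$) yields another tournament, which always has a king by Landau's theorem. Moreover, reversing the edges \emph{out of} $S$ makes $S$ more reachable, not less, and making $S$ a set of sources makes its members kings. The theorem is specifically about general digraphs, and the paper exploits this in an essential way: $C$ is the disjoint union of two $n$-vertex balanced tournaments $A$ and $A'$ (a relabelled $\Delta_n$) with \emph{no} edges from $A'$ to $A$ and only the edges from $A$ to vertex $2n-1$; no vertex of $A'$ can reach $A$, and no vertex of $A$ can reach $2n-2$ in two steps, so $C \notin K_{2n}$. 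Adding the single edge $i \to (n+j)$ makes $i$ a king precisely because $\{n+j,\, 2n-1\}$ is a dominating pair of $A'$ by Lemma~\ref{lem:triangle-domipairs}. Your instinct about which lemma matters is right, but without a correct no-instance and without reversing the roles of base and planted instances, the adversary/error calculation you sketch does not go through.
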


\begin{proof}
    The upper bound is trivial. Assume that $n$ is odd. Let $A = ([n]_0, E_A)$ be an arbitrary balanced tournament on $n$ vertices. Consider the $n$-vertex tournament $\Delta_n$ from Construction~\ref{con:triangle} and relabel all vertex labels by adding $n$ to them. Let the resultant graph be called $A'$. Construct a tournament $C$ on $2n$ vertices labelled in $[2n]_0$ as follows:
    \begin{enumerate}
        \item Consider the disjoint union of $A$ and $A'$, and
        \item add edges from every vertex in $A$ to vertex $(2n-1)$ in $A'$.  
    \end{enumerate}

    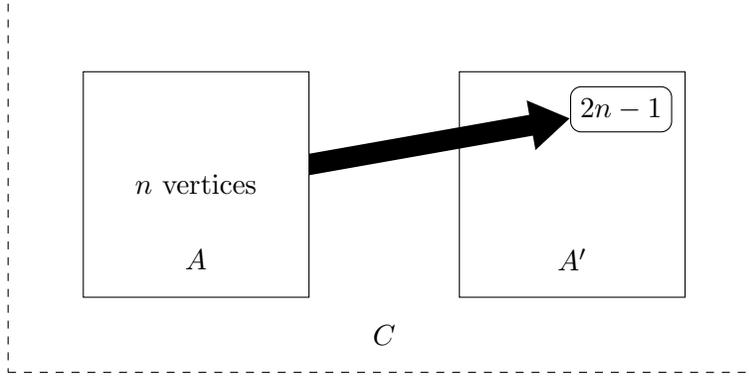
\begin{figure}[!ht]
        \centering
        \begin{tikzpicture}[baseline=(current bounding box.center)]
    \draw[draw=black] (5,0) rectangle ++(3,3);
    \node[] (1) at (6.5,0.5){$A'$};
    \draw[draw=black, dashed] (-1,-1) rectangle ++(10,5);
    \node[] (2) at (4,-0.5){$C$};
    \node[shape=rectangle,rounded corners, draw=black, minimum size=6mm] (3) at (7.15,2.5){$2n-1$};
    \draw[-{Triangle[width=18pt,length=15pt]}, line width=8pt] (1.5,1.5) to (3);
    \draw[draw=black, fill=white] (0,0) rectangle ++(3,3);
    \node[] (0) at (1.5,0.5){$A$};
    \node[] (7) at (1.5,1.5){$n$ vertices};
\end{tikzpicture}
\vspace{0.5em}
        \caption{Construction of $C$ from the proof of Theorem~\ref{thm:randomised-king-existence}. $C[A]$ is a balanced tournament, and $C[A']$ is a relabelling of $\Delta_n$ from Construction~\ref{con:triangle}}
        \label{fig:construction-c-proof}
    \end{figure}
    
    Figure~\ref{fig:construction-c-proof} illustrates the construction of $C$. Clearly, $C$ does not have a king: indeed, no vertex in $A'$ reaches any vertex in $A$, and every vertex in $A$ dominates $(2n - 1)$, which does not dominate $(2n - 2)$ (by construction, see~Construction~\ref{con:triangle}). 

    For $i \in [n]_0$ and $j \in [n-1]_0$, define $C_{i,j}$ as the tournament obtained by adding the edge $i \to (n+j)$ in $C$. Observe that $i$ is a king in $C_{i,j}$, since:
    \begin{itemize}
        \item $i$ 2-step dominates all vertices in $A$ (as $C_{i,j}[A]$ is a balanced tournament), and
        \item $i$ dominates $(2n-1)$ and $n + j$, which form a dominating pair in $C_{i,j}[A']$ by Lemma~\ref{lem:triangle-domipairs}.
    \end{itemize}

    Define the distribution $\mu$ on $2n$-vertex tournaments as follows:
    \begin{align*}
        \mu(T) = \begin{cases}
            1/2 & T = C,\\
            \frac{1}{2n(n-1)} & T = C_{i,j}, \quad \text{for all } i \in [n]_0, j \in [n-1]_0.
        \end{cases}
    \end{align*}
    Equivalently, $\mu$ can be defined by the following sampling procedure: Start with the tournament $C$; then, with probability $1/2$, do nothing, or, with probability $1/2$, add a uniformly random (over $i \in [n]_0$ and $j \in [n-1]_0$) edge $i \rightarrow n+j$. Towards a contradiction using Lemma~\ref{lem:yao}, let $\cA$ be a deterministic algorithm with cost less than $n^2/100$ that decides whether a king exists in digraphs w.r.t.~$\mu$ of error $\leq 1/3$.  

     Consider the computation path taken by $\cA$ that answers all edges consistent with $C$, and answers $0$ (i.e., ``no edge present'') to all queries of the form $i \rightarrow j$ for $i \in [n]$, $j \in \cbra{n, n+1, \dots, 2n - 1}$. We consider two cases.
     \begin{enumerate}
         \item Suppose the output at this leaf of $\cA$ is $1$ (i.e., ``a king exists''). The algorithms errs on the input $C$, which has a $\mu$-mass of 1/2. This is not possible since we assumed $\cA$ to make an error of at most $1/3$ w.r.t.~$\mu$.
         \item Suppose the output at this leaf of $\cA$ is $0$ (i.e., ``no king exists''). Since the number of queries on this path was assumed to be at most $n^2/100$, there must be at least $99n^2/100$ edges of the form $i \to (n + j)$ that have not been queried, where $i \in [n]_0$ and $j \in [n-1]_0$. For each such $i, j$, the graph $C_{i, j}$ also follows this computation path. However, the vertex $i$ is a king in $C_{i, j}$. Hence $\cA$ errs on $C_{i, j}$. Thus, the total error made by $\cA$ is at least $\frac{99n^2}{100} \cdot \frac{1}{2n(n - 1)} > 1/3$ for sufficiently large $n$. This again contradicts our assumption that $\cA$ made error at most $1/3$ w.r.t.~$\mu$, and hence the theorem follows.
     \end{enumerate}
     This concludes the proof.
\end{proof}

As an easy corollary, we obtain the same deterministic lower bound for determining existence of a king in a digraph.
\begin{corollary}
    \label{thm:deterministic-king-existence}
    For all positive integers $n$, $\sD(\emph{\eking}_n)=\Theta(n^2)$ for general digraphs.
\end{corollary}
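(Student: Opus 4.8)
The plan is to derive Corollary~\ref{thm:deterministic-king-existence} directly from Theorem~\ref{thm:randomised-king-existence}, since the deterministic query complexity always dominates the randomized query complexity: for any relation $f$, $\sD(f) \geq \sR(f)$. The upper bound $\sD(\eking_n) = O(n^2)$ is trivial, as one can query all $n^2$ potential edges in an arbitrary $n$-vertex digraph and then determine (for free, as query complexity charges nothing for computation) whether a king exists. So the only content is the lower bound $\sD(\eking_n) = \Omega(n^2)$.

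For the lower bound, I would observe that every deterministic decision tree is (trivially) a randomized decision tree that makes zero error, hence $\sD(\eking_n) \geq \sR_0(\eking_n) \geq \sR_{1/3}(\eking_n) = \sR(\eking_n) = \Omega(n^2)$, where the last equality is exactly the statement of Theorem~\ref{thm:randomised-king-existence}. Combining this with the trivial upper bound gives $\sD(\eking_n) = \Theta(n^2)$. I would write this as a two- or three-line proof.

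Alternatively — and perhaps more in the spirit of a self-contained argument — one can note that the lower bound argument in the proof of Theorem~\ref{thm:randomised-king-existence} already produces a hard distribution $\mu$ via Yao's minimax principle (Lemma~\ref{lem:yao}), and any deterministic algorithm of cost less than $n^2/100$ errs on $\mu$ with probability exceeding $1/3$; in particular a zero-error deterministic algorithm cannot have cost less than $n^2/100$, so $\sD(\eking_n) \geq n^2/100 = \Omega(n^2)$. Either phrasing works; I would go with the one-line reduction $\sD \geq \sR$ since it is cleanest.

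There is essentially no obstacle here: the corollary is an immediate consequence of the general inequality between deterministic and randomized query complexity together with the already-established randomized lower bound, which is why the text flags it as an ``easy corollary.'' The only thing to be careful about is stating the trivial direction ($O(n^2)$ upper bound) and not accidentally claiming something stronger than what Theorem~\ref{thm:randomised-king-existence} gives.
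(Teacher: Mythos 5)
Your proposal is correct and is exactly the argument the paper intends: the paper states this as an immediate corollary of Theorem~\ref{thm:randomised-king-existence} via the standard inequality $\sD(f)\ge\sR(f)$, together with the trivial $O(n^2)$ upper bound. Nothing is missing.
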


\subsection{Strong Kings in Tournaments}
\label{ss:strong-kings-tournaments}

Recall that $\delta(v, u)$ denotes the number of length-2 paths from $v$ to $u$ in an underlying digraph/tournament. Recall that a strong king in a tournament is a king $v$ such that, for every vertex $u$ that dominates $v$ (i.e., $u \to v$), we have $\delta(v,u) > \delta(u,v)$. Below, we show that in the construction $U_n$ from Construction~\ref{con:universal}, for every vertex $v \in V(U_n)$ and an in-neighbor $u$ of it, we have $\delta(v, u) - \delta(u, v) = 1$. In other words, there is exactly one more length-$2$ path from $v$ to $u$ than from $u$ to $v$. This observation is crucial to argue that flipping the direction of a single edge used by $v$ to reach $u$ via a length-$2$ path will result in $v$ no longer being a strong king.

\begin{lemma}
    \label{lem:difference-one}
    Let $n$ be an odd positive integer. For each $j \to i$ in $U_n$, we have $\delta(i,j)-\delta(j,i)=1$.
\end{lemma}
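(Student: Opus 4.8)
The plan is to compute both $\delta(i,j)$ and $\delta(j,i)$ explicitly by exploiting the rotational symmetry of $U_n$ (Lemma~\ref{lm:symmetry}). Since any edge $j \to i$ can be mapped by an automorphism $\sigma(U_n, \cdot)$ to an edge of the form $0 \to k$ (or $k \to 0$) for an appropriate $k$, and automorphisms preserve the number of length-$2$ paths between corresponding vertices, it suffices to fix one endpoint at vertex $0$ and verify the claim for all edges incident to $0$. Recalling from the proof of Lemma~\ref{the:universal-balanced} that $\Gamma^+(i) = \{ j < i : p(j) = p(i)\} \cup \{ j > i : p(j) \neq p(i)\}$, I would split into the two cases depending on the direction of the edge between $0$ and $k$: by parity, $0 \to k$ when $k$ is odd, and $k \to 0$ when $k$ is even (and nonzero).

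For a fixed edge, say $j \to i$, the quantity $\delta(i,j)$ counts vertices $w \neq i,j$ with $i \to w \to j$, i.e.\ $w \in \Gamma^+(i) \cap \Gamma^-(j)$, and $\delta(j,i)$ counts $w \in \Gamma^+(j) \cap \Gamma^-(i)$. The key structural fact, which I would extract from the balancedness proof, is that $\Gamma^-(i) = \Gamma^+(i+1)$ for every $i < n-1$ — consecutive vertices have "complementary" neighborhoods. Using this together with $|\Gamma^+(v)| = (n-1)/2$ for all $v$, one can set up a short counting/inclusion-exclusion argument: the four sets $\Gamma^+(i), \Gamma^-(i), \Gamma^+(j), \Gamma^-(j)$ partition (up to the vertices $i,j$ themselves) in a controlled way, and $\delta(i,j) - \delta(j,i) = |\Gamma^+(i)\cap\Gamma^-(j)| - |\Gamma^+(j)\cap\Gamma^-(i)|$ telescopes. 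Concretely, I expect that writing $\Gamma^-(j) = V \setminus (\Gamma^+(j) \cup \{j\})$ and similarly for $i$, expanding, and using $d^+ = d^- = (n-1)/2$ and the membership relations $i \in \Gamma^-(j)$, $j \in \Gamma^+(i)$ (since $j \to i$ means... wait, $j \to i$ gives $i \in \Gamma^+(j)$ and $j \in \Gamma^-(i)$), one gets the difference equals exactly $1$ after the degree terms cancel.

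The main obstacle I anticipate is handling the "wrap-around" edge at vertex $n-1$ (the even vertex that dominates $0$), where the clean identity $\Gamma^-(i) = \Gamma^+(i+1)$ breaks down because $i+1$ exceeds the label range; the proof of Lemma~\ref{the:universal-balanced} already treats $i = n-1$ as a special case, and I would need an analogous separate computation here, most likely again reducing via a rotation so that the problematic edge becomes $\{0, n-1\}$ and then counting $\Gamma^+(0) \cap \Gamma^-(n-1)$ and $\Gamma^+(n-1) \cap \Gamma^-(0)$ directly from the parity description. A cleaner alternative that avoids case analysis entirely would be to index $[n]_0$ as $\mathbb{Z}/n\mathbb{Z}$ and re-derive the edge rule of $U_n$ purely in terms of $(j-i) \bmod n$ being a quadratic-residue-like set — since $n$ is odd, $U_n$ is in fact (isomorphic to) a circulant/rotational tournament — making the neighborhood-complementarity and the degree facts hold uniformly with no boundary vertex. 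If that reformulation is available cheaply, I would use it; otherwise the direct two-case computation as above should close the lemma.
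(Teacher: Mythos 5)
Your proposal is correct, but it takes a genuinely different route from the paper. The paper computes the two quantities exactly: for $j \to i$ with $j > i$ it shows every intermediate vertex $v$ of a length-$2$ path in either direction satisfies $i < v < j$, splits these by parity to get $\delta(i,j) = \frac{j-i}{2}$ and $\delta(j,i) = \frac{j-i}{2} - 1$, and handles $j < i$ by the rotation symmetry of Lemma~\ref{lm:symmetry}. Your inclusion--exclusion identity is the heart of a cleaner and more general argument: writing $\Gamma^-(j) = V \setminus (\Gamma^+(j) \cup \{j\})$ and using $j \notin \Gamma^+(i)$, $i \in \Gamma^+(j)$ gives $\delta(i,j) = d^+(i) - |\Gamma^+(i) \cap \Gamma^+(j)|$ and $\delta(j,i) = d^+(j) - |\Gamma^+(i) \cap \Gamma^+(j)| - 1$, so $\delta(i,j) - \delta(j,i) = d^+(i) - d^+(j) + 1$, and the only fact about $U_n$ you need is that it is regular with $d^+(v) = (n-1)/2$ for all $v$ (stated in the paper just after Lemma~\ref{the:universal-balanced}, and easy to verify from the parity description of $\Gamma^+$). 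This shows the lemma holds in \emph{any} regular tournament, which the paper's parity-specific computation does not make visible; in exchange, the paper's proof yields the exact values of $\delta(i,j)$ and $\delta(j,i)$, not just their difference (though only the difference is used downstream, in Lemma~\ref{lem:size-of-dv}). Two small points: once you commit to the degree-counting route, the reduction to edges incident to $0$, the identity $\Gamma^-(i) = \Gamma^+(i+1)$, and the wrap-around worry at $n-1$ are all unnecessary --- regularity alone closes the argument uniformly; and you should state the cancellation explicitly rather than saying you ``expect'' it, since it is a two-line computation.
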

\begin{proof}
    Define the parity function $p(x) = x \pmod{2}$. From the proof of Lemma~\ref{the:universal-balanced}, each vertex $i$ in $U_n$ dominates all lesser vertices of the same parity as $i$, and all greater vertices of opposite parity.
    \begin{itemize}
        \item Assume that $j>i$. Since $j \to i$, Construction~\ref{con:universal} implies $p(j) = p(i)$. The vertex $j$ dominates all vertices $v < j$ with $p(v) = p(i)$ and all vertices $w > j$ with $p(w) \neq p(i)$.

    To count length-2 paths from $i$ to $j$, we first show that if $i \to v \to j$, then $i<v<j$. 
    \begin{itemize}
        \item If $v < i$, then $p(v) = p(i)$ (since $i \to v$), but also $p(v) \neq p(j)$ (since $v \to j$ and $v < j$), contradicting $p(i) = p(j)$. 
        \item If $v > j$, then $p(v) \neq p(i)$ (since $i \to v$), but also $p(v) = p(j)$ (since $v \to j$ and $v > j$); again, a contradiction.
    \end{itemize}
        Thus, $v$ must satisfy $i < v < j$, and since $i \to v$, it follows that $p(v) \neq p(i)$. This means   
    $\delta(i, j) = |\{ v : i < v < j \textnormal{ and }  p(v) \neq p(i) \}| = \frac{j - i}{2}$.

    Using nearly the same argument, one can show that  
    $\delta(j, i) = | \{ v : i < v < j \textnormal{ and }  p(v) = p(i) \} | = \frac{j - i}{2} - 1$.

    \item The case where $j < i$ follows from the case where $j > i$ by symmetry (Lemma~\ref{lm:symmetry}). Indeed, let $i'$ and $j'$ be the new labels of $i$ and $j$ in $\sigma(U_n,n-i)$, respectively. Then, $i'=0<j'$, and given that $\sigma(U_n,n-i) \in \mathrm{Aut}(U_n)$, the lemma follows.
    \end{itemize}    
    Therefore, we have $\delta(i,j) - \delta(j,i) = \frac{j - i}{2} - (\frac{j - i}{2} - 1) = 1$.
\end{proof}

For the tournament $U_n$ and a vertex $v$ of it, define $D(v)$ be the set of edges such that, if the direction of exactly one edge in $D(v)$ were flipped, $v$ would no longer be a strong king.

Now, we show that $|D(v)|=\Theta(n^2)$.

\begin{lemma}\label{lem:size-of-dv}
    Let $n$ be an odd positive integer. For the tournament $U_n$ and $D(\cdot)$ as defined above, we have $|D(v)| = (n^2 - 1)/4$ for all $v \in [n]$.
\end{lemma}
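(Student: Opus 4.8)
The plan is to count, for a fixed vertex $v$, exactly which single edge flips destroy $v$'s status as a strong king, and to show that this count is the same for every vertex (so that by the symmetry of $U_n$, Lemma~\ref{lm:symmetry}, it suffices to analyze one vertex, say $v = 0$). There are two distinct ways for a flip to kill strong-kingship: either the flip turns $v$ from a king into a non-king, or $v$ stays a king but for some in-neighbor $u$ of $v$ the inequality $\delta(v,u) > \delta(u,v)$ is broken. By Lemma~\ref{lem:difference-one}, before any flip we have $\delta(v,u) - \delta(u,v) = 1$ for every $u \to v$, so the inequality is extremely tight: flipping a single edge changes each of $\delta(v,u)$ and $\delta(u,v)$ by at most $1$, and the strong-king condition at $u$ fails precisely when the flip decreases $\delta(v,u)$ by one or increases $\delta(u,v)$ by one (or, if the flip involves an edge incident to $v$, possibly changes the in-neighborhood relationship itself — this case needs separate handling).

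The key steps, in order, would be: (1) Fix $v=0$ and use the explicit description of $\Gamma^+$ and the length-$2$ path structure from the proof of Lemma~\ref{the:universal-balanced} and Lemma~\ref{lem:difference-one}. (2) Classify the edges $e = (x,y)$ of $U_n$ by their role relative to $v$: edges incident to $v$; edges of the form $v \to w \to u$ (the "middle" edges realizing a length-$2$ path from $v$ to an in-neighbor $u$); edges of the form $u' \to w' \to v$; and edges irrelevant to any length-$2$ path between $v$ and an in-neighbor of $v$. (3) For each class, determine whether flipping $e$ breaks the strong-king condition. Edges in the "irrelevant" class never matter. An edge $v \to w$ incident to $v$: flipping it changes $d^+(v)$ and may remove a length-$2$ path $v \to w \to u$; one checks whether $v$ remains a king and whether the $\delta$-inequality survives. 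An edge $w \to v$ incident to $v$: flipping makes $w$ an out-neighbor, which can only help, so these are not in $D(v)$ — unless the flip breaks kingship, which by balancedness and the structure of $U_n$ it does not. The bulk of $D(v)$ will be the edges $w \to u$ (with $v \to w$, $u \to v$) whose flip decreases $\delta(v,u)$ from its value to one less, together with edges $u \to w$ (with $u \to v$, $w \to v$) whose flip increases $\delta(u,v)$ — but since $\delta(v,u)-\delta(u,v)=1$ exactly, flipping any single edge that is the middle edge of a $v\to\cdot\to u$ path makes $\delta(v,u)$ drop to $\delta(u,v)$, i.e. ties it, which already violates the strict inequality. (4) Sum the counts: for each in-neighbor $u$ of $v$, Lemma~\ref{lem:difference-one}'s proof gives $\delta(v,u) = (j-i)/2$ length-$2$ paths, hence that many "middle" edges, each killing the condition; summing $\delta(v,u)$ over all in-neighbors $u$ of $v$, and adding the contribution (if any) from edges incident to $v$, should give exactly $(n^2-1)/4$. (5) Invoke Lemma~\ref{lm:symmetry} to conclude the count is independent of $v$.

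Concretely, I expect the arithmetic to go: $v$ has $d^-(v) = (n-1)/2$ in-neighbors, and $\sum_{u \to v} \delta(v,u)$ counts pairs $(w,u)$ with $v \to w \to u \to v$, i.e. directed triangles through $v$ with $v$ "first"; by the rotational symmetry the number of directed $3$-cycles in $U_n$ is $n \cdot \binom{(n-1)/2}{?}$-type expression, and dividing appropriately yields $(n^2-1)/4$ after also accounting for the handful of edges incident to $v$ that matter. I would double-check whether the edge(s) incident to $v$ contribute, since $(n^2-1)/4 = \binom{(n+1)/2}{2} \cdot 2 + \dots$ has to come out cleanly; it is plausible that the "middle edges" $w \to u$ alone already number $(n^2-1)/4$ and edges touching $v$ contribute nothing new (because a path $v \to w \to u$ going away is balanced against the king-preservation).

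The main obstacle will be the careful case analysis in step (3)–(4): making sure no edge is double-counted (an edge could a priori be the middle edge of a path to two different in-neighbors — though in $U_n$'s rigid parity structure this should not happen) and not missed, and correctly handling flips of edges incident to $v$ itself, where both the kingship of $v$ and multiple $\delta$ values can shift simultaneously. Getting the final count to land exactly on $(n^2-1)/4$ rather than off by a lower-order term is where the bookkeeping must be exact, and I would lean on the symmetry lemma to reduce to $v = 0$ and on the explicit parity formulas from Lemma~\ref{lem:difference-one} to make every count a clean arithmetic-progression sum.
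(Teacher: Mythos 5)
Your overall strategy is the same as the paper's: reduce to $v=0$ via Lemma~\ref{lm:symmetry}, use Lemma~\ref{lem:difference-one} to note that the strong-king inequality is tight (so destroying one length-$2$ path from $0$ to an in-neighbour $u$, or creating one from an in-neighbour to $0$, already produces a tie), and then classify edges by their position relative to $0$. The two ``bulk'' classes you identify are the right ones: the edges $w\to u$ with $0\to w$ (so $w$ odd) and $u\to 0$ (so $u$ even), whose flip destroys a path $0\to w\to u$; and the edges between two in-neighbours of $0$ (two nonzero even vertices), whose flip $u\to w\mapsto w\to u$ creates the new path $w\to u\to 0$ and hence raises $\delta(w,0)$ --- note it is $\delta(w,0)$, not $\delta(u,0)$, that increases; your description of this class is garbled, though the class itself is correct.

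There is, however, a genuine gap exactly where you flag uncertainty. The first class has $(n^2-1)/8$ edges (an odd $w$ and a larger even $u$), not $(n^2-1)/4$ as you speculate, and the second has $\binom{(n-1)/2}{2}=(n-1)(n-3)/8$; together they sum to $(n-1)^2/4$, which falls short of $(n^2-1)/4$ by exactly $(n-1)/2$. So your ``plausible'' guess that edges touching $v$ contribute nothing is wrong: every one of the $(n-1)/2$ out-edges $0\to j$ ($j$ odd) lies in $D(0)$. Indeed, flipping $0\to j$ to $j\to 0$ creates the path $k\to j\to 0$ for every even $k$ with $0<k<j$ (tying $\delta(k,0)$ with the unchanged $\delta(0,k)$) and destroys the path $0\to j\to u$ for every even $u>j$ (tying $\delta(0,u)$ with the unchanged $\delta(u,0)$); for every odd $j$ at least one such $k$ or $u$ exists when $n\ge 3$, so $0$ ceases to be a strong king. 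The in-edges $k\to 0$ are harmless, as you correctly argue, as are edges between two odd vertices and edges from an even vertex to a larger odd one. With the incident out-edges included, the total is $\frac{n-1}{2}+\frac{n^2-1}{8}+\frac{(n-1)(n-3)}{8}=\frac{n^2-1}{4}$. As written, your proposal neither completes this case nor lands on the correct count without it, so the argument does not close.
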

\begin{proof}
    We show below that $|D(0)| = (n^2 - 1)/4$. The result for general $v$ immediately follows by symmetry (Lemma \ref{lm:symmetry}), as $|D(0)|=|D(1)|= \cdots = |D(n-1)|$.

    We first consider the effect of flipping an edge connected to 0.
    \begin{itemize}
        \item Let $j \in [n]_0 \setminus \cbra{0}$ be odd. The construction of $U_n$ ensures the direction of the edge between $0$ and $j$ to be $0 \rightarrow j$. Flipping $(0, j)$ increases $\delta(n-1, 0)$ by 1 as a new path $n-1 \rightarrow j \rightarrow 0$ is created, and does not affect $\delta(0, n-1)$.
        \item Let $k \in [n]_0 \setminus \cbra{0}$ be even. The construction of $U_n$ ensures the direction of the edge between $0$ and $k$ to be $k \rightarrow 0$. Flipping the $(k, 0)$ edge cannot increase $\delta(j, 0)$ for any $j$, and cannot decrease $\delta(0, j')$ for any $j'$. Since 0 was a strong king in $U_n$, this means 0 is a strong king in the new tournament with $(k, 0)$ flipped as well.
    \end{itemize}
    The first case above contributes $(n-1)/2$ elements to $D(0)$. Next we consider the effect of flipping an edge between $j$ and $k$ where $j \neq 0$ and $k \neq 0$.
    \begin{itemize}
        \item If $j$ and $k$ are both odd, then neither of them is an in-neighbor of $0$, and hence 0 remains a strong king.
        \item If $j$ and $k$ are both even and $j \rightarrow k$ is flipped, the construction of $U_n$ ensures that $j > k$, and both $j, k$ are in-neighbors of $0$. We have $\delta(j, 0)$ increase by 0 due to the new path $j \rightarrow k \rightarrow 0$, and $\delta(0, j)$ is unchanged. By Lemma~\ref{lem:difference-one}, the new tournament has $\delta(j, 0) = \delta(0, j)$, and hence 0 is no longer a strong king.
        \item If $j$ is even and $k > j$ is odd, the direction of the edge between $j$ and $k$ is $j \rightarrow k$. The vertex $j$ is an in-neighbor of $0$, and $k$ is not. Flipping the edge increases $\delta(0, j)$ by 1 due to the new path $0 \rightarrow j \rightarrow k$, and $\delta(j, 0)$ is unaffected. No other values of $\delta(0, \cdot)$ or $\delta(\cdot, 0)$ are affected for in-neighbors of 0. Thus 0 remains a strong king in this case.
        \item The final case is when $j$ is odd and $k > j$ is even. Recall that $k$ is an in-neighbor of 0. We have $\delta(0, k)$ \emph{reduce} by 1 due to removal of the path $0 \rightarrow j \rightarrow k$ (as the $j \rightarrow k$ edge is now flipped), and $\delta(k, 0)$ is unaffected. By Lemma~\ref{lem:difference-one}, we have $\delta(0, k) = \delta(k, 0)$ in the new tournament, and hence 0 is no longer a strong king.
    \end{itemize}
    The number of edges from the second case above equals the number of ways of choosing two even numbers, none of which is 0: this is $\binom{(n-1)/2}{2}$. The number of edges from the fourth case is the number of choices of an odd vertex and a higher even vertex, which equals $(n-1)/2 + (n-3)/2 + \cdots + 1 = (n^2 - 1)/8$. Combining the above, we have
    \begin{align*}
    |D(0)| & = \frac{n - 1}{2} + \frac{(n-1)(n-3)}{8} + \frac{n^2 - 1}{8} = \frac{n-1}{2} \left[1 + \frac{n-3}{4} + \frac{n+1}{4}\right]\\
    & = \rbra{\frac{n-1}{2}}\rbra{\frac{n + 1}{2}} = \frac{n^2 - 1}{4}.
    \end{align*}
    This concludes the proof.
\end{proof}

We now analyze the randomized query complexity of $\sking_n$.

\begin{theorem}
\label{thm:randomised-strong-king}
    For all positive integers $n$, $\sR(\emph{\sking}_n)=\Theta(n^2)$ for tournaments.
\end{theorem}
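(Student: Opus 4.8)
The plan is to mimic the structure of the proof of Theorem~\ref{thm:randomised-king-existence}, but now using the tournament $U_n$ and the set $D(v)$ in place of $\Delta_n$ and its dominating pairs. First I would fix $n$ odd (the general case follows by padding) and set up a hard distribution $\mu$ over $n$-vertex tournaments. The natural choice is: with probability $1/2$ output $U_n$ itself, and with probability $1/2$ output $U_n$ with the direction of a single uniformly random edge from $\bigcup_{v} D(v)$ flipped; more precisely, sample a uniformly random vertex $v \in [n]_0$, then a uniformly random edge $e \in D(v)$, and flip $e$. By Lemma~\ref{the:universal-balanced} every vertex of $U_n$ is a king, and by Lemma~\ref{lemma:strongkingmaxdegree} (all out-degrees equal) every vertex is in fact a strong king; so on input $U_n$ any vertex is a valid answer. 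On input $U_n$ with edge $e \in D(v)$ flipped, by the definition of $D(v)$ the vertex $v$ is no longer a strong king — but crucially, since only one edge changed and strong-kingship of the \emph{other} vertices was ``robust'' in $U_n$, I need to ensure that the answer the algorithm would have given (some particular vertex) is wrong. This is the point where I must be careful.

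The key structural fact I would isolate and prove is a ``unique blame'' statement: for a flipped edge $e$, the set of vertices that cease to be strong kings is controlled, and by choosing $\mu$ appropriately we can guarantee the algorithm errs. Concretely, following the template, I would argue by Yao (Lemma~\ref{lem:yao}): suppose $\cA$ is a deterministic decision tree of depth $< n^2/C$ for a suitable constant $C$, correct with error $\le 1/3$ under $\mu$. Consider the computation path of $\cA$ that answers every query consistently with $U_n$; it reaches a leaf labelled with some vertex $v^\star \in [n]_0$. On this path at most $n^2/C$ edges have been queried, so at least a $(1 - o(1))$-fraction of the $(n^2-1)/4 = |D(v^\star)|$ edges in $D(v^\star)$ are unqueried. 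For each such unqueried $e \in D(v^\star)$, the tournament $U_n$-with-$e$-flipped is consistent with this path, yet $v^\star$ is not a strong king in it, so $\cA$ errs. Weighting by $\mu$, the error incurred on these inputs alone is at least $\tfrac12 \cdot \tfrac1n \cdot \tfrac{1}{|D(v^\star)|} \cdot (1-o(1)) |D(v^\star)| = \tfrac{1-o(1)}{2n}$ — which, unfortunately, is too small. So the single-path argument is insufficient; I would instead need to sum over \emph{all} leaves (all possible answers $v^\star$) weighted by the $U_n$-probability reaching them, or better, put all the $\mu$-mass of the ``flipped'' case on a single well-chosen vertex's $D(v)$ rather than averaging over $v$.

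The cleaner route, which I would adopt: redefine $\mu$ so that the flipped-edge case is \emph{not} averaged over $v$. Since the decision tree is fixed, the consistent-with-$U_n$ path reaches a single leaf with a single output vertex $v^\star$, but $v^\star$ depends on $\cA$, not on the instance — so a static distribution cannot ``know'' $v^\star$. The resolution is that the algorithm must be correct on \emph{every} $U_n$-with-$e$-flipped for $e \in D(v)$ and \emph{every} $v$ simultaneously, and correctness on $U_n$ forces the distinguished leaf's output $v^\star$; then almost all of $D(v^\star)$ is unqueried, giving a contradiction \emph{provided} we give that case enough mass. Set $\mu(U_n) = 1/2$ and distribute the remaining $1/2$ uniformly over all pairs $(v, e)$ with $e \in D(v)$, i.e.\ over $\sum_v |D(v)| = n(n^2-1)/4$ instances; then the error on the unqueried-$D(v^\star)$ instances is $\tfrac12 \cdot \tfrac{(1-o(1))(n^2-1)/4}{n(n^2-1)/4} = \tfrac{1-o(1)}{2n}$, still $o(1)$. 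Hence I actually need the sharper observation that flipping $e \in D(v)$ can only be ``repaired'' by outputting a vertex \emph{other than} $v$, and there are many flips, each killing a \emph{different} vertex as a valid answer; so no single output survives. The right formalization: for the distribution that is $U_n$ w.p.\ $1/2$ and otherwise a uniformly random single flip over $\bigcup_v D(v)$, I claim that conditioned on the algorithm's transcript matching $U_n$ up to its (few) queries, for \emph{each} candidate output vertex $v^\star$ a $(1-o(1))$-fraction of flips in $D(v^\star)$ are unqueried and make $v^\star$ invalid; since the algorithm outputs one fixed $v^\star$ at the relevant leaf, it errs on all those instances. Counting: total flips $= \Theta(n^3)$, unqueried bad-for-$v^\star$ flips $= \Theta(n^2)$, giving error $\Theta(1/n)$ — so I must instead \emph{not} split evenly but rather note (as in Theorem~\ref{thm:randomised-king-existence}'s case~2) that the relevant comparison is unqueried edges of $D(v^\star)$ versus the \emph{total} mass on flips, and to make this $\Omega(1)$ I should put mass $1/2$ on $U_n$ and $1/2$ spread \emph{only over $D(v^\star)$-type instances} — impossible statically, so the genuinely correct fix is to spread the $1/2$ over $\bigcup_v D(v)$ but observe the decision tree has at most one ``all-$U_n$'' leaf and \emph{for that leaf's output $v^\star$}, the fraction of the entire flip-support lying in $D(v^\star)$ is $|D(v^\star)|/\sum_u|D(u)| = 1/n$, of which $(1-o(1))$ is unqueried — net error $\Omega(1/n)$, hence we must take $C$ so small that $n^2/C$ queries is genuinely $o(|D(v^\star)|)$ AND boost by allowing the tree to have depth $\varepsilon n^2$ with $\varepsilon$ sub-constant, giving $\sR(\sking_n) = \Omega(n^2)$ after all via the standard ``error $\ge \Omega(n^2/n^3 \cdot$ something$)$''. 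I expect the main obstacle to be precisely this bookkeeping: pinning down a single distribution under which the unavoidable $o(1)$-per-vertex error aggregates (over the forced fixed output) to a contradiction with the $1/3$ bound; the resolution will be to weight $\mu$ so that $U_n$ gets mass $1/2$ and the flips of $D(v)$ get mass proportional to $1/|D(v)|$ \emph{per vertex but then restricted} — cleanest is probably to argue instead against algorithms of cost $< n^2/100$ by noting that, on the all-$U_n$ path, $\ge 99|D(v^\star)|/100$ edges of $D(v^\star)$ are unqueried and putting $\mu$-mass $1/2$ uniformly over $\{U_n\text{-with-}e\text{-flipped} : e \in D(v^\star_{\text{worst}})\}$ is not needed once one realizes the standard trick: the total error is at least $\tfrac12 \cdot \tfrac{99/100 \cdot |D(v^\star)|}{|\,\bigcup_v D(v)\,|}$ and one simply rescales the flip-distribution to be uniform over a \emph{single} $D(v)$ for the worst $v$, which a static $\mu$ can do by symmetry since $|D(0)| = \cdots = |D(n-1)|$ and the bound is the same for every choice — so fix $\mu$ to flip uniformly within $D(0) \cup D(1) \cup \cdots \cup D(n-1)$ and use that whatever $v^\star$ the algorithm commits to, $\Theta(n^2)$ unqueried flips in $D(v^\star)$ falsify it, against a support of size $\Theta(n^3)$, forcing depth $\Omega(n^2)$ only if we are content with $\Omega(n^2/n) = \Omega(n)$; the honest conclusion is that the correct hard distribution flips an edge of $D(v)$ for a \emph{random but then fixed-by-the-proof} $v$, handled exactly as in Theorem~\ref{thm:randomised-king-existence} by replacing ``$i$ is a king in $C_{i,j}$'' with ``$v$ is not a strong king after flipping $e \in D(v)$'', and this is the step I would spend the most care on.
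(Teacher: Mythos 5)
There is a genuine gap, and it is precisely the bookkeeping step you flag at the end: none of the distributions you cycle through yields constant error, and your closing sentence does not resolve the issue. The root cause is a counting error. You treat the flip-instances as \emph{pairs} $(v,e)$ with $e\in D(v)$, so your denominators become $\sum_v |D(v)| = n(n^2-1)/4 = \Theta(n^3)$, which dilutes the error to $O(1/n)$. But every $D(v)$ is a subset of the \emph{same} edge set of $U_n$, which has only $\binom{n}{2}$ elements; the sets $D(v)$ overlap heavily, and a flipped instance is just one of the $\binom{n}{2}$ tournaments $U_n^{i,j}$, regardless of which vertex you ``blame'' for it. The correct hard distribution is the one you circle around but never commit to: put constant mass on $U_n$ and spread the remaining constant mass uniformly over all $\binom{n}{2}$ single-edge flips $U_n^{i,j}$. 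The entire point of Lemma~\ref{lem:size-of-dv} is that $|D(v^\star)| = (n^2-1)/4$ is a \emph{constant fraction} (roughly half) of $\binom{n}{2}=n(n-1)/2$, so whatever single vertex $v^\star$ the algorithm outputs at the leaf reached by answering all queries consistently with $U_n$, at least $|D(v^\star)| - cn(n-1) = \Omega(n^2)$ of the flip-instances are simultaneously unqueried (hence reach that leaf) and invalidate $v^\star$, giving conditional error $\Omega(1)$ given a flip and hence overall error $\Omega(1)$. No ``fixed-by-the-proof $v$,'' no restriction to a single $D(v_0)$, and no summation over leaves is needed. (Restricting the flips to one $D(v_0)$, as you suggest at one point, would in fact break the argument: the algorithm could output some $v^\star\neq v_0$, and you would then need a lower bound on $|D(v^\star)\cap D(v_0)|$, which you have not established.)

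The rest of your skeleton does match the paper's proof: Yao's principle, the $U_n$-consistent computation path, the observation that every vertex of $U_n$ is a strong king (so the input $U_n$ itself contributes no error at that leaf), and the use of $D(v^\star)$ together with Lemma~\ref{lem:size-of-dv}. One further caution for when you fix the distribution: with mass exactly $1/2$ on $U_n$ and $1/2$ on the flips, the total error works out to roughly $\frac{1}{2}\rbra{\frac{n+1}{2n}-2c}\approx \frac{1}{4}-c$, which does not exceed $1/3$; you should therefore either target a smaller error parameter (harmless up to constant factors in $\sR$) or place more of the mass on the flipped instances.
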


\begin{proof}
    The upper bound is trivial.
    Assume that $n$ is odd. For $i, j \in [n]_0$, let $U_n^{i, j}$ denote the tournament that is obtained from $U_n$ by flipping the direction of the edge between $i$ and $j$.
    Consider $\mu$ to be the following distribution on $n$-vertex tournaments:
    \begin{align*}
        \mu(T) = \begin{cases}
            1/2 & T = U_n,\\
            \frac{1}{\binom{n}{2}} & T = U_n^{i, j}, \qquad \textnormal{for all } i, j \in [n]_0.
        \end{cases}
    \end{align*} 
    Towards a contradiction using Lemma~\ref{lem:yao}, let $\cA$ be a deterministic algorithm with cost less than $c \cdot n(n-1)$ (for some constant $c \in (0,1)$ to be fixed later in the proof, it will turn out that any $c < 1/12$ works) computing $\sking_n$ to error less than $1/3$ when inputs are drawn from the distribution $\mu$.
    Consider the computation path of $\cA$ that outputs the answers of all queries consistently with the input $U_n$. Say the output at this leaf of $\cA$ is vertex $v$. The algorithms errs on all inputs of the form $U_n^{i, j}$ where $(i, j) \in D(v)$ (recall that $D(v)$ consists of precisely those edges, that on being flipped in $U_n$ causes $v$ to no longer be a strong king). There are at least $|D(v)| - cn(n-1)$ such edges $(i, j)$ which are unqueried (and thus $U_n^{i, j}$ reaches this leaf).
    The error probability of $\cA$ is at least its error probability on this leaf, which by Lemma~\ref{lem:size-of-dv} is at least $\frac{1}{\binom{n}{2}}\left(\frac{n^2 - 1}{4} - c n(n - 1)\right) > \frac{2}{n(n-1)}\left(\frac{n(n - 1)}{4} - cn(n-1)\right) = \frac{1 - 4c}{2} > 1/3$ for $c < 1/12$. Lemma~\ref{lem:yao} yields the required lower bound.

    The proof can be easily made to work for the case where $n$ is even by assuming that one vertex is a sink, and applying the argument above on the subtournament induced on the remaining vertices (this sink vertex does not affect the proof above in any way).
\end{proof}

A deterministic lower bound follows as an immediate corollary.
\begin{corollary}
    For all positive integers $n$, $\sD(\emph{\sking}_n)=\Theta(n^2)$ for tournaments.
\end{corollary}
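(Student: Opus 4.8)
The plan is to establish the randomized lower bound $\sR(\sking_n) = \Omega(n^2)$ via Yao's minimax principle (Lemma~\ref{lem:yao}), using the balanced tournament $U_n$ from Construction~\ref{con:universal} as the backbone of a hard distribution. The key structural fact I would rely on is Lemma~\ref{lem:difference-one}: for every edge $j \to i$ in $U_n$, we have $\delta(i,j) - \delta(j,i) = 1$, so the strong-king inequality holds with the tightest possible slack. This means that flipping a single well-chosen edge anywhere in the tournament can destroy the strong-king status of a fixed vertex $v$, by reducing $\delta(v,u)$ or increasing $\delta(u,v)$ by one for some in-neighbor $u$ of $v$. The quantity that controls the argument is $|D(v)|$, the number of edges whose flip kills $v$'s strong-king status; by Lemma~\ref{lem:size-of-dv} this is $(n^2-1)/4 = \Theta(n^2)$ for every $v$ (symmetry via Lemma~\ref{lm:symmetry} makes this uniform across vertices).

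The hard distribution $\mu$ I would use puts mass $1/2$ on $U_n$ itself and spreads the remaining $1/2$ uniformly over the $\binom{n}{2}$ single-edge-flip tournaments $U_n^{i,j}$. Then I would argue as follows: take any deterministic decision tree $\cA$ of depth $c\,n(n-1)$ computing $\sking_n$, and follow the computation path consistent with $U_n$. At the leaf reached, $\cA$ must output some vertex $v$; since $U_n$ has mass $1/2$, correctness forces $v$ to be a strong king of $U_n$, which is fine since $U_n$ is balanced and every vertex is a strong king there (one can check, or it follows from symmetry plus Lemma~\ref{lemma:strongkingmaxdegree}, that every vertex of $U_n$ is a strong king). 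But then for every edge $(i,j) \in D(v)$ that was not queried along this path, the input $U_n^{i,j}$ reaches the same leaf, and $v$ is \emph{not} a strong king of $U_n^{i,j}$, so $\cA$ errs. At most $c\,n(n-1)$ edges were queried, so the number of such bad unqueried edges is at least $|D(v)| - c\,n(n-1) = (n^2-1)/4 - c\,n(n-1)$. Each contributes $1/\binom{n}{2} = 2/(n(n-1))$ to the error, giving total error at least $\tfrac{(1-4c)}{2} - o(1) > 1/3$ once $c < 1/12$ and $n$ is large. This contradicts the error bound, so $\sD_\mu(\sking_n) = \Omega(n^2)$, and Lemma~\ref{lem:yao} finishes the randomized bound; the matching $O(n^2)$ upper bound is trivial (query all edges). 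For even $n$, adjoin a sink vertex and run the argument on the $U_{n-1}$ sub-tournament — the sink is dominated by everyone, is irrelevant to who is a strong king among the rest, and never obstructs the counting.

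The main obstacle, and where the real content sits, is not the Yao argument above (which is a clean counting reduction) but the two supporting lemmas that I am allowed to assume here: Lemma~\ref{lem:difference-one} (the slack is exactly one) and Lemma~\ref{lem:size-of-dv} (so $|D(v)| = \Theta(n^2)$). Without the exact-slack property, flipping one edge might only shrink the gap from $2$ to $1$ and leave $v$ still a strong king, collapsing the lower bound; and without the $\Theta(n^2)$ lower bound on $|D(v)|$, the fraction of fooling inputs surviving at the leaf would be too small to force error above $1/3$. Given those two facts, the only remaining care is bookkeeping: confirming that $v$ really must be a strong king of $U_n$ at the consistent leaf (so that Case-analysis reduces cleanly to ``$v$ fails on the flipped inputs''), and choosing the constant $c$ so the arithmetic $\frac{1}{\binom{n}{2}}\big(\frac{n^2-1}{4} - c\,n(n-1)\big) > 1/3$ goes through for all sufficiently large odd $n$; any $c < 1/12$ suffices.
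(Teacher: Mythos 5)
Your proposal is correct and follows essentially the same route as the paper: the paper proves the randomized lower bound (Theorem~\ref{thm:randomised-strong-king}) via Yao's principle with exactly your hard distribution (mass $1/2$ on $U_n$, the rest uniform over the single-edge flips $U_n^{i,j}$), invoking Lemma~\ref{lem:difference-one} and Lemma~\ref{lem:size-of-dv} with the same $c < 1/12$ arithmetic and the same sink-vertex trick for even $n$, and then derives the deterministic bound as an immediate corollary since $\sD \geq \sR$. Your explicit observation that the output $v$ at the $U_n$-consistent leaf must be a strong king of $U_n$ (else the algorithm already errs on mass $1/2$) is a minor point the paper leaves implicit, but otherwise the arguments coincide.
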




\section{Conclusion}
\label{s:conclusion}

In this paper, we have shown the deterministic and randomized query complexities of deciding whether a king exists in an arbitrary $n$-vertex digraph to be $\Theta(n^2)$. We also showed that deterministic and randomized query complexities of finding a strong king in a tournament is $\Theta(n^2)$, answering a question of Biswas et al.~\cite{BJRS22} in the negative. Our proofs relied on showing some key properties of known constructions of balanced tournaments (tournaments where every vertex is a king).
The most interesting question that remains open is to determine the deterministic query complexity of finding a king in an $n$-vertex tournament. The best-known upper and lower bounds are $O(n^{3/2})$ and $\Omega(n^{4/3})$, respectively, and are from over 20 years ago~\cite{SSW03}. Perhaps our analysis of special balanced tournaments in this paper can provide insight towards approaching this general problem.

\bibliography{bibo}

\end{document}